\newcommand{\R}{\mathbb{R}}
\newcommand{\X}{\mathcal{X}}
\newtheorem{theorem}{Theorem}
\newtheorem{lemma}[theorem]{Lemma}
\newcommand{\BIGOP}[1]{\mathop{\mathchoice%
{\raise-0.22em\hbox{\huge $#1$}}%
{\raise-0.05em\hbox{\Large $#1$}}{\hbox{\large $#1$}}{#1}}}
\begin{document}

\title{Approximating multiobjective combinatorial optimization problems with the OWA criterion}

\author[1]{Andr\'e Chassein}
\author[2]{Marc Goerigk}
\author[3]{Adam Kasperski}
\author[4]{Pawe{\l} Zieli\'nski}

\affil[1]{Fachbereich Mathematik, Technische Universit\"at Kaiserslautern, Germany 
(\texttt{chassein@mathematik.uni-kl.de})}
\affil[2]{Department of Management Science, Lancaster University, United Kingdom
 (\texttt{m.goerigk@lancaster.ac.uk})}
\affil[3]{Faculty of Computer Science and Management, Wroc{\l}aw  University of Science and Technology, Poland (\texttt{adam.kasperski@pwr.edu.pl})}
\affil[4]{Faculty of Fundamental Problems of Technology, Wroc{\l}aw  University of Science and Technology, Poland (\texttt{pawel.zielinski@pwr.edu.pl})}

\date{}

\maketitle

\begin{abstract}
The paper  deals with 
a multiobjective combinatorial optimization problem with~$K$ linear cost functions.
 The popular Ordered Weighted Averaging (OWA) criterion is used to aggregate the cost functions and compute a solution. 
 It is well known that minimizing OWA  for most basic combinatorial problems 
 is weakly NP-hard
 even if the number of objectives~$K$ equals two, and strongly NP-hard when $K$ is a part of the input. 
 In this paper, the problem with nonincreasing weights in the OWA criterion   and a  large $K$ is considered.
 A method of reducing the number of objectives by appropriately aggregating  the objective costs before
 solving the problem is proposed.
 It is shown that an optimal solution to the reduced problem
  has a guaranteed worst-case approximation ratio.
  Some new approximation results for the Hurwicz criterion, which is a special case of OWA,
  are also presented.
\end{abstract}

\textbf{Keywords: } multiobjective optimization; ordered weighted averaging; robust optimization; approximation algorithms; combinatorial optimization

\section{Introduction}

In many practical applications of combinatorial optimization, we seek a solution optimizing  more than one objective function (criterion). 
In this case, we typically seek a set of \emph{efficient (Pareto optimal) solutions} 
or 
reduce the multiobjective  optimization problem to  a single objective one, by using  aggregation functions
and finding   \emph{compromise solutions} according to a given aggregation function
(see, e.g.,~\cite{ER05}). 
For surveys on multiobjective combinatorial optimization problems, we refer the reader to~\cite{ER00,ER05, UT94}.  Multiobjective problems arise naturally under uncertainty. If  the cost coefficients are uncertain, then we can provide a sample of various cost realizations, called \emph{scenarios} (states of the world), where
each scenario defines an objective function. 
The aim is to find a solution that has a good performance 
whatever scenario is finally revealed. Robust min-max criteria can be then used to compute such a solution 
(see, e.g.,~\cite{KY97, ABV09, KZ16b}).

In order to aggregate the objective functions, the \emph{Ordered Weighted Averaging} (OWA) criterion, proposed in~\cite{YA88}, can be used. 
The OWA criterion allows the decision maker to express 
information about importance of each objective (scenario)
by assigning some weights to them and
utilize   this information while computing a solution.
Some special cases of  the OWA criterion are the maximum, average, median, or Hurwicz criteria, which are the traditional criteria used in multiobjective optimization or in
decision making under uncertainty~(see, e.g.,~\cite{ER05,LR57}). 
A detailed description of various properties and applications of OWA can be found in~\cite{YKB11}.

Unfortunately, minimizing OWA is NP-hard for most basic combinatorial optimization problems, even if the number of objectives equals two. This negative result also holds for the  class of robust min-max problems~\cite{KY97, KZ16b}, being a special case of OWA minimization. The problem complexity increases with the number of objectives. It turns out that the problem of minimizing OWA if the number of objectives is a part of the input
 is not at all approximable for basic network problems~\cite{KZ15}. 
 The case of nonincreasing weights in the OWA  criterion,  that models preferences with respect to a risk, 
  is more computationally tractable.  However,  it is still NP-hard in general, for example,  when the OWA  becomes
  the maximum criterion (decision maker is extremely risk-averse). 
  Fortunately, for this case
  some approximation algorithms are known~\cite{KZ15}.

The most popular method of solving a multiobjective optimization problem with the OWA criterion is to formulate it as a mixed integer program (MIP) and apply some mathematical programming techniques to find an optimal solution.  Several such formulations were proposed in~\cite{OS03, OO12}. They were further investigated and refined in~\cite{FPP14, FPPS17, CG15, GS12}. The MIP approach is efficient when  problem sizes (in particular the number of objectives) are not  very large. For basic combinatorial optimization problems, such as the shortest path, minimum matching, or minimum spanning tree, computational test were performed for instances having up to~10 objectives~\cite{FPP14, FPPS17,GS12}. 
For larger number of objectives, 
the MIP approach can be inefficient and thus there is need  for methods
 which provide good approximate solutions. 
Problems with larger number of objectives can arise in optimization under uncertainty, when, for instance,
a scenario analysis (simulation) is performed (see, e.g.,~\cite{KW94,RW91,W89}).
A sample of parameter 
values (cost values) from any distribution is generated and models a correlation among these parameters by their 
joint realizations (scenarios). Thus the uncertainty about the parameters is modeled  by
a set of scenarios and, obviously,
the bigger the set is, the better is the estimation of the uncertainty.

In this paper we  propose a method of reducing the number of objectives before solving the problem. The reduction consists in partitioning the objectives into groups of  size $\ell$ and  replacing each group with one objective, by appropriately aggregating the costs. For example, when $\ell=2$, we can reduce the number of objectives by half, which can significantly decrease the time of solving the problem by using, for example, a MIP formulation solved by an off-the-shelf software. We will show that after solving such a smaller-sized problem, we get an approximate solution with some guaranteed worst-case approximation ratio. We can thus significantly reduce the problem size with only a small decrease of the quality of a  solution obtained. 
In the extreme case, we can aggregate all objectives into one. We prove that an optimal solution to such a problem has the same approximation guarantee as the solutions obtained by using the approximation algorithm proposed in~\cite{KZ15}. We show the results of experiments, which suggest that practical performance of the aggregation 
proposed may be better than the theoretical worst one.
The idea of aggregation of the objectives has been recently applied to the class of robust min-max (regret) problems in~\cite{CG18}. Since minimizing OWA generalizes the min-max approach, some results obtained in~\cite{CG18} are extended in this paper.
It is worth pointing out that the aforementioned 
aggregations assume any order of objective functions and 
they do not exploit a similarity of the objectives. 
Accordingly,
 in the paper we also propose a heuristic method,
based on a $K$-means  clustering, which takes this fact into account,
namely, ``similar'' objective functions   are aggregated together. 
Unfortunately, such an aggregation does not possess a guaranteed worst-case approximation ratio.
Thus we evaluate it experimentally.

This paper is organized as follows. In Section~\ref{sec1} we present the problem formulation and recall its known computational properties. In Section~\ref{sec2} we show that the aggregation approaches allow us to compute approximate solutions for a wide class of problems. In Section~\ref{sec3} we provide new approximation results for  the Hurwicz criterion. Finally, in Section~\ref{sec4}, we present the results of  computational tests, which give some evidence that our aggregation approaches, together with an off-the-shelf MIP solver, gives good
approximate solutions in reasonable time.

\section{Problem formulation}
\label{sec1}

In this paper we are concerned with the following multiobjective combinatorial 
optimization problem~$\mathcal{P}$ with $K$ linear objective functions:
\[
\begin{array}{lll}
\mathcal{P}: &\min &  (\pmb{c}^T_1\pmb{x},\dots,\pmb{c}^T_K\pmb{x})\\
&  & \pmb{x} \in \X \subseteq\{0,1\}^n,
 \end{array}
 \]
where $\X$ is a set of feasible solutions, typically described in the form of linear constraints,  and 
$\pmb{c}^T_k=(c_{1k},\dots, c_{nk})^T$ is a vector of nonnegative costs under the $k$th objective, 
$k\in [K]$ ($[K]$ denotes the set $\{1,\ldots,K\}$). 
The meaning of the $\min$ in  problem~$\mathcal{P}$
depends on a comparison among  objective value vectors
$(\pmb{c}^T_1\pmb{x},\dots,\pmb{c}^T_K\pmb{x})$ for feasible solutions 
 $\pmb{x} \in \X$. One of the most popular approaches  to 
 solving~$\mathcal{P}$  is reducing it to
  a problem with a single objective function, by using an aggregation function, and
 exploiting the natural ordering ``$\leq$'' in~$\R$ (see, e.g.,~\cite{ER05}).

In this paper 
we  aggregate the objectives  in~$\mathcal{P}$ by the OWA operator, proposed in~\cite{YA88}, which is defined as follows. 
Let $\pmb{a}=(a_1,\dots,a_K)$ be 
a vector of  reals. We  define a vector of weights $\pmb{w}=(w_1,\dots,w_K)$ to be $w_1+\dots+w_K=1$ and $w_k\in [0,1]$ for each $k\in [K]$. Let $\sigma$ be a permutation of $[K]$ such that $a_{\sigma(1)}\geq a_{\sigma(2)}\geq \dots \geq a_{\sigma(K)}$. Then
$$\mathrm{OWA}_{\pmb{w}}(\pmb{a})=\sum_{k\in [K]} w_k a_{\sigma(k)}.$$
We now discuss several special cases of OWA
that are well-known criteria used in multiobjective optimization or in
decision making under uncertainty.

Consider first OWA with nonincreasing weights, i.e. when $w_1\geq w_2\geq \dots \geq w_K$.
If $w_1=1$ and $w_k=0$ for $k>1$, then OWA becomes the maximum. If the weights are uniform, i.e. $w_k=\frac{1}{K}$ for each $k\in [K]$, then OWA is the average (or the Laplace criterion). The maximum and average are extreme cases of OWA with nonincreasing weights. An intermediate case is the $p$-centra criterion, $p\in[K]$, obtained by the following weight setting
$w_k=\frac{1}{p}$ for $k\in[p]$ and  $w_k=0$ for $k>p$. 

We now turn to OWA with arbitrary weights.
If $w_{\lceil K/2 \rceil}=1$ and $w_k=0$ for $k\neq \lceil K/2 \rceil$, then OWA becomes the median.
Another important case is when $w_1=\lambda$, $w_K=1-\lambda$ for some $\lambda\in [0,1]$ and $w_k=0$ for the remaining weights, which corresponds to well known Hurwicz  pessimism-optimism criterion, being a convex combination of the maximum and minimum  possible objective values. Finally, if $w_K=1$ and $w_k=0$ for $k<K$, then OWA becomes the minimum.

After applying the OWA criterion with specified weights~$\pmb{w}$ to aggregate the vector of objective values 
$\pmb{F}(\pmb{x})=(\pmb{c}^T_1\pmb{x},\dots,\pmb{c}^T_K\pmb{x})$ for a given feasible solution~$\pmb{x}\in \mathcal{X}$, we get  the following optimization problem, considered in this paper:
$$\textsc{OWA}~\mathcal{P}: \min_{\pmb{x}\in \mathcal{X}} \mathrm{OWA}_{\pmb{w}}(\pmb{F}(\pmb{x}))=\min_{\pmb{x}\in \mathcal{X}} \mathrm{OWA}_{\pmb{w}}(\pmb{c}^T_1\pmb{x},\dots,\pmb{c}^T_K\pmb{x}).$$
The $\textsc{OWA}~\mathcal{P}$ problem arises naturally  in the robust optimization setting. Each cost 
vector $\pmb{c}_k$,
 $k\in [K]$, can be then interpreted as a \emph{scenario}, i.e. a realization of the uncertain costs
 (a state of the world) which can occur. In this case, nonincreasing weights model a  risk aversion of decision makers, i.e. the less uniform are the weights the more risk averse decision maker is. In the extreme case 
 ($w_1=1$, $w_k=0$ for $k>1$) $\textsc{OWA}~\mathcal{P}$ becomes the robust min-max version of problem $\mathcal{P}$, which is widely discussed in the existing literature.

Let us now briefly describe the complexity of $\textsc{OWA}~\mathcal{P}$. It is clear that this problem is NP-hard when the corresponding problem with one objective is already NP-hard.  However, it is well known that $\textsc{OWA}~\mathcal{P}$ is NP-hard for most basic polynomially solvable problems $\mathcal{P}$, even if $K=2$. This is a direct consequence of the results obtained for robust min-max problems (see~\cite{KY97, ABV09, KZ16b} for surveys). 
For arbitrary weight vectors $\pmb{w}$, $\textsc{OWA}~\mathcal{P}$ cannot be approximated for some network problems~\cite{KZ15} (for example when $\mathcal{P}$ is the shortest path problem). This negative result 
holds  when, for example,  OWA is the median~\cite{KZ15}. However, the problem is more tractable when the weights are nonincreasing. In this case the following approximation algorithm has been proposed in~\cite{KZ15}. Define $\hat{c}_i=\mathrm{OWA}_{\pmb{w}}(c_{i1},\dots, c_{iK})$ for $i\in [n]$. We solve problem $\mathcal{P}$ with one  aggregated objective $\hat{\pmb{c}}^T\pmb{x}$, where
 $\hat{\pmb{c}}^T=(\hat{c}_1,\dots,\hat{c}_n)^T$. It has been shown in~\cite{KZ15}, that under nonincreasing weights this algorithm has an approximation ratio of~$w_1K$, which is the best currently known general result. For some particular cases (for example when $\mathcal{P}$ is the shortest path) $\textsc{OWA}~\mathcal{P}$ admits an FPTAS if the number of objectives~$K$ is constant~\cite{KZ15}. However, from the practical point of view, the FPTAS is inefficient as its running time is exponential in~$K$.

In Section~\ref{sec2} we discuss the $\textsc{OWA}~\mathcal{P}$ problem with nonincreasing weights. We apply the idea of reducing the number of objectives before solving the problem. This approach has been originally proposed for the robust min-max problems in~\cite{CG18}. We show how it can be extended to $\textsc{OWA}~\mathcal{P}$.

\section{The problem with nonincreasing weights}
\label{sec2}

In this section 
we make the assumption that the weights $\pmb{w}$ in $\textsc{OWA}~\mathcal{P}$
are such that $w_1\geq w_2\geq\cdots \geq w_K$. 
Notice that this case
contains both the maximum and the average criteria as special (boundary) cases
and allows decision makers to take their attitude towards a risk into account. We present  an aggregation of the objective values in  $\textsc{OWA}~\mathcal{P}$
that allows us to reduce the number of the objective functions in the problem under consideration
and compute a solution with a guaranteed approximation ratio.

\subsection{Auxiliary results}
\label{sec2-1}

We start by proving some properties of the aggregating  approach.
Let us recall Chebyshev's sum inequality (see, eg.,~\cite[p. 36]{M70}), namely:
\begin{lemma}
\label{lem0}
Let $(a_1,\dots,a_m)$ and $(w_1,\dots,w_m)$ be two real vectors such that 
$a_1\geq a_2\geq \dots \geq  a_m$ and $w_1\geq w_2\geq \dots \geq w_m$. 
Then the following inequality 
$$(w_1+w_2+\dots+w_m)(a_1+a_2+\dots+a_m)\leq m (w_1a_1+w_2a_2+\dots+w_ma_m)$$
holds.
\end{lemma}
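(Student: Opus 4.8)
The plan is to use the standard ``sum of pairwise products of differences'' argument, which is completely self-contained and avoids any appeal to the rearrangement inequality. The starting observation is that the two sequences are sorted in the same order: since $a_1\geq\cdots\geq a_m$ and $w_1\geq\cdots\geq w_m$, for any pair of indices $i,j\in[m]$ the differences $a_i-a_j$ and $w_i-w_j$ always have the same sign (both nonnegative when $i\leq j$, both nonpositive when $i\geq j$). Consequently $(a_i-a_j)(w_i-w_j)\geq 0$ for every $i$ and $j$, and hence
$$0\leq \sum_{i=1}^m\sum_{j=1}^m (a_i-a_j)(w_i-w_j).$$

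Next I would expand the product inside the double sum into its four terms and evaluate each separately. The two ``diagonal'' contributions $\sum_{i,j}a_iw_i$ and $\sum_{i,j}a_jw_j$ each equal $m\sum_{k}a_kw_k$, since the index that does not appear in the summand contributes a factor of $m$. The two ``cross'' contributions $\sum_{i,j}a_iw_j$ and $\sum_{i,j}a_jw_i$ each decouple into the product $\bigl(\sum_k a_k\bigr)\bigl(\sum_k w_k\bigr)$. Collecting these pieces gives
$$0\leq 2m\sum_{k=1}^m a_kw_k-2\Bigl(\sum_{k=1}^m a_k\Bigr)\Bigl(\sum_{k=1}^m w_k\Bigr),$$
and dividing by $2$ and rearranging yields exactly the claimed inequality. (An equally short alternative would be to sum the rearrangement-type bound $\sum_k a_kw_k\geq \sum_k a_kw_{\sigma(k)}$ over the $m$ cyclic shifts $\sigma$, but the difference-of-products route is more elementary.)

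There is no genuine obstacle here; the only point requiring care is the bookkeeping of the double sum, specifically confirming that the diagonal terms pick up the factor $m$ while the cross terms factor into a product of the two one-dimensional sums. I would double-check the sign reasoning in the first step, since that is the single place where the common-ordering hypothesis is actually used: it is precisely the assumption that both sequences are nonincreasing that forces every summand to be nonnegative. Were one sequence increasing while the other decreased, each product $(a_i-a_j)(w_i-w_j)$ would flip sign and the inequality would reverse, so this hypothesis cannot be dropped.
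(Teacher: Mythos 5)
Your proof is correct. Note that the paper does not actually prove this lemma: it is stated as a recollection of Chebyshev's sum inequality with a citation to Mitrinovi\'c's \emph{Analytic Inequalities}, so there is no in-paper argument to compare against. Your difference-of-products derivation is the standard self-contained proof of that classical inequality: the sign observation $(a_i-a_j)(w_i-w_j)\geq 0$ is exactly where the common monotonicity is used, and the bookkeeping of the double sum (diagonal terms contributing $m\sum_k a_kw_k$ each, cross terms factoring into $\bigl(\sum_k a_k\bigr)\bigl(\sum_k w_k\bigr)$ each) is accurate, yielding precisely the claimed bound after dividing by $2$. Supplying this argument makes the paper's use of the lemma self-contained, which is a mild improvement over a bare citation.
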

\begin{lemma}
\label{lem01}
Let $\pmb{a}=(a_1,\dots,a_K)$  be a nonnegative real vector  and $\pmb{w}$ be 
a nonincreasing weight vector. Suppose that $a_{\sigma(1)}\geq a_{\sigma(2)}\geq\cdots\geq a_{\sigma(K)}$,
where $\sigma$ is a permutation of~$K$.
Let us construct $\pmb{a}'$ by setting  $a'_{\sigma(i)}:=a_{\sigma(i)}+a_{\sigma(j)}$ and $a'_{\sigma(j)}:=0$ 
for some $\sigma(i)<\sigma(j)$ in $\pmb{a}$. 
Then $\mathrm{OWA}_{\pmb{w}}(\pmb{a}')\geq \mathrm{OWA}_{\pmb{w}}(\pmb{a})$.
\end{lemma}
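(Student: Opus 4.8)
The plan is to exploit two facts: that $\mathrm{OWA}_{\pmb{w}}$ depends only on the multiset of values of its argument, and that for nonincreasing weights it is a nonnegative combination of the ``sum of the $k$ largest entries''. Write $a_{[1]}\geq a_{[2]}\geq\cdots\geq a_{[K]}$ for the entries of $\pmb{a}$ in nonincreasing order and set $T_k(\pmb{a})=\sum_{l=1}^{k}a_{[l]}$, so that $T_k(\pmb{a})=\max_{S\subseteq[K],\,|S|=k}\sum_{m\in S}a_m$. A summation-by-parts with the convention $w_{K+1}:=0$ gives $\mathrm{OWA}_{\pmb{w}}(\pmb{a})=\sum_{k=1}^{K}w_k a_{[k]}=\sum_{k=1}^{K}(w_k-w_{k+1})\,T_k(\pmb{a})$, and since $\pmb{w}$ is nonincreasing each coefficient $w_k-w_{k+1}$ is nonnegative. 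Consequently it suffices to establish the single family of inequalities $T_k(\pmb{a}')\geq T_k(\pmb{a})$ for every $k\in[K]$; multiplying these by the nonnegative coefficients and summing yields $\mathrm{OWA}_{\pmb{w}}(\pmb{a}')\geq\mathrm{OWA}_{\pmb{w}}(\pmb{a})$ at once.

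The crux is therefore the claim $T_k(\pmb{a}')\geq T_k(\pmb{a})$ for each $k$. The advantage of the $\max$-over-$k$-subsets description of $T_k$ is that it spares me from tracking how the two modified entries are reinserted into the sorted order of $\pmb{a}'$: I only need to exhibit one good $k$-subset. Passing from $\pmb{a}$ to $\pmb{a}'$ replaces the two values at positions $\sigma(i),\sigma(j)$ by their sum and by $0$, so the total mass is preserved and only these coordinates change; moreover $\pmb{a}'$ as a multiset is unaffected by which of the two coordinates receives the combined mass, so I may assume it lands on the weakly larger entry, i.e. $a_{\sigma(i)}\geq a_{\sigma(j)}$. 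Fixing $k$, I take $S^\star=\{\sigma(1),\dots,\sigma(k)\}$, the set realizing $T_k(\pmb{a})$, and split into three cases by how many of $\sigma(i),\sigma(j)$ lie in $S^\star$. If both do, evaluating $\pmb{a}'$ on $S^\star$ merely moves the mass $a_{\sigma(j)}$ from one member of $S^\star$ to another, so $\sum_{m\in S^\star}a'_m=T_k(\pmb{a})$. If exactly one does, it must be $\sigma(i)$ (the larger, hence higher-ranked, entry, which is the one receiving the mass), giving $\sum_{m\in S^\star}a'_m=T_k(\pmb{a})+a_{\sigma(j)}\geq T_k(\pmb{a})$. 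If neither does, then $\pmb{a}$ and $\pmb{a}'$ agree on $S^\star$, so $\sum_{m\in S^\star}a'_m=T_k(\pmb{a})$. In every case $T_k(\pmb{a}')\geq\sum_{m\in S^\star}a'_m\geq T_k(\pmb{a})$, as required.

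Conceptually this argument says that $\pmb{a}'$ majorizes $\pmb{a}$ and that $\mathrm{OWA}_{\pmb{w}}$ with nonincreasing weights is Schur-convex, but the elementary route above delivers the bound directly and avoids invoking any external majorization theory. The one delicate point, and the only place where the ordering hypothesis is used, is the middle case: one must be sure the extra mass is deposited on a coordinate ranked at least as high as the one being zeroed, since otherwise keeping $S^\star$ could \emph{lower} the top-$k$ sum (and one would have to replace $S^\star$ by a cleverly chosen set). This is exactly what the assumption $a_{\sigma(i)}\geq a_{\sigma(j)}$ (the natural reading of $\sigma(i)<\sigma(j)$) guarantees. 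Everything else is routine bookkeeping, and in particular no appeal to Lemma~\ref{lem0} is needed here.
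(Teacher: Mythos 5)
Your proof is correct, but it takes a genuinely different route from the paper's. The paper argues directly on the weighted sum in the sorted order of $\pmb{a}$: writing $\mathrm{OWA}_{\pmb{w}}(\pmb{a})=\sum_{k}w_k a_{\sigma(k)}$, it replaces the two terms $w_i a_{\sigma(i)}+w_j a_{\sigma(j)}$ by $w_i\bigl(a_{\sigma(i)}+a_{\sigma(j)}\bigr)+w_j\cdot 0$, which can only increase the sum because $w_i\geq w_j$, and then bounds the resulting expression $\sum_k w_k a'_{\sigma(k)}$ by $\mathrm{OWA}_{\pmb{w}}(\pmb{a}')$; that last step silently invokes the rearrangement fact that pairing nonincreasing weights with any ordering of $\pmb{a}'$ other than its own nonincreasing one underestimates its OWA value (and the proof reads the hypothesis as $i<j$ in sorted positions, whereas the statement literally says $\sigma(i)<\sigma(j)$ --- a discrepancy your multiset observation neatly absorbs). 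Your Abel-summation decomposition $\mathrm{OWA}_{\pmb{w}}(\pmb{a})=\sum_k(w_k-w_{k+1})T_k(\pmb{a})$ combined with the variational characterization $T_k(\pmb{a})=\max_{|S|=k}\sum_{m\in S}a_m$ makes the re-sorting issue disappear entirely, at the cost of a longer case analysis; it is in effect the majorization/Schur-convexity argument carried out by hand, and it extends verbatim to any finite sequence of such mass transfers. The one point to watch is your middle case: with ties, the member of $S^\star$ need not be the coordinate receiving the mass unless you also adopt the (harmless) convention that among equal entries the recipient is the one ranked first by $\sigma$, or else swap $\sigma(j)$ for $\sigma(i)$ in $S^\star$, which only increases the sum. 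Both arguments are sound; the paper's is shorter, yours is more self-contained and more robust.
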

\begin{proof}
Since $w_i\geq w_j$, we get
\begin{align*}
\mathrm{OWA}_{\pmb{w}}(\pmb{a}) &= \sum_{k\in [K]\setminus\{i,j\}} w_k a_{\sigma(k)}+(w_ia_{\sigma(i)}+w_ja_{\sigma(j)}) \\
&\leq \sum_{k\in [m]\setminus\{i,j\}} w_k a_{\sigma(k)}+(w_i(a_{\sigma(i)}+a_{\sigma(j)}) + w_j \cdot 0)\leq  \mathrm{OWA}_{\pmb{w}}(\pmb{a}').
\end{align*}
\end{proof}

Consider a  nonzero vector of nonnegative reals  $\pmb{a}=(a_1,a_2,\dots,a_K)$ and
a vector of 
nonincreasing weights $\pmb{w}$.
Assume that $K$ is a multiple of $\ell$ and   the component values in $\pmb{a}$ are in an
 arbitrary order. We now aggregate the  values in $\pmb{a}$ and $\pmb{w}$
 and form corresponding vectors  $\overline{\pmb{a}}$ and $\overline{\pmb{w}}$
 whose sizes are reduced to $K/\ell$. Namely, let
 $\overline{\pmb{a}}=(\overline{a}_1,\overline{a}_2,\dots,\overline{a}_{K/\ell})$ 
and $\overline{\pmb{w}}=(\overline{w}_1,\dots,\overline{w}_{K/\ell})$, where
 $\overline{a}_k=\frac{1}{\ell}(a_{(k-1) \ell+1}+\dots+a_{k \ell})$ and
$\overline{w}_k=w_{(k-1) \ell+1}+\dots+w_{k \ell}$ for $k\in [K/\ell]$, i.e. $\overline{a}_k$ is 
formed by averaging $\ell$ subsequent values 
and $\overline{w}_k$ is the sum of $\ell$ subsequent weights. 
The next lemma characterizes the value of $\mathrm{OWA}$ for   aggregated vector~$\overline{\pmb{a}}$.
\begin{lemma}
\label{lem1}
For nonincreasing weights~$\pmb{w}$ and 
 any  nonzero vector of nonnegative reals  $\pmb{a}=(a_1,a_2,\dots,a_K)$,
the following inequalities
\begin{equation}	
	\mathrm{OWA}_{\pmb{\overline{w}}}(\pmb{\overline{a}})\overset{(a)}{\leq} \mathrm{OWA}_{\pmb{w}}(\pmb{a})\overset{(b)}{\leq}  \ell\rho \cdot \mathrm{OWA}_{\pmb{\overline{w}}}(\pmb{\overline{a}})
	\label{lem1ab}
\end{equation}	
hold,
where $\rho=\max_{k\in [K/ \ell]}\frac{\sum_{i=1}^k w_i}{\sum_{i=1}^{k} {\overline{w}_i}}$.
\end{lemma}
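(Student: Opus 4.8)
The plan is to compare $\pmb{a}$ against two extreme redistributions of the same blockwise totals and to read off both inequalities from the behaviour of $\mathrm{OWA}_{\pmb{w}}$ under such redistributions. Fix the blocks $B_k=\{(k-1)\ell+1,\dots,k\ell\}$, $k\in[K/\ell]$, and introduce the $K$-vector $\pmb{b}$ obtained by flattening each block to its mean (so $\pmb{b}$ carries $\ell$ copies of $\overline{a}_k$ on $B_k$), and the $K$-vector $\pmb{a}^{\max}$ obtained by concentrating the whole block total $\ell\,\overline{a}_k$ on a single coordinate of $B_k$ and setting the others to $0$. All three vectors have the same total and, more to the point, the same total on each block. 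The single observation driving everything is that a within-block transfer of mass from a coordinate sitting lower in the sorted order to one sitting higher can only increase $\mathrm{OWA}_{\pmb{w}}$: this is exactly (the partial-transfer version of) Lemma~\ref{lem01}, whose proof uses only $w_i\ge w_j$ and nonnegativity. Since $\pmb{b}$ is the most ``spread'' and $\pmb{a}^{\max}$ the most ``concentrated'' redistribution of the block totals, repeated application of such transfers yields
\[
\mathrm{OWA}_{\pmb{w}}(\pmb{b})\ \le\ \mathrm{OWA}_{\pmb{w}}(\pmb{a})\ \le\ \mathrm{OWA}_{\pmb{w}}(\pmb{a}^{\max}).
\]
Equivalently, $\pmb{b}\prec\pmb{a}\prec\pmb{a}^{\max}$ in the majorization order and $\mathrm{OWA}_{\pmb{w}}$ with nonincreasing weights is Schur-convex.

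The second ingredient is an exact identity. Because each block of $\pmb{b}$ is constant, sorting $\pmb{b}$ places the $\ell$ copies of $\overline{a}_{[m]}$ (the $m$-th largest block mean) into the consecutive positions $(m-1)\ell+1,\dots,m\ell$, which receive precisely the weights summing to $\overline{w}_m$. Hence $\mathrm{OWA}_{\pmb{w}}(\pmb{b})=\sum_{m}\overline{w}_m\,\overline{a}_{[m]}=\mathrm{OWA}_{\overline{\pmb{w}}}(\overline{\pmb{a}})$, which combined with the left inequality above is exactly~(a). Ties between blocks are harmless, since the coordinates involved are equal; and $\overline{\pmb{w}}$ is itself nonincreasing, being a sum of consecutive nonincreasing weights.

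For~(b) I would evaluate the right-hand vector directly. In $\pmb{a}^{\max}$ the $K/\ell$ nonzero entries are the block totals $\ell\,\overline{a}_k$ and the remaining entries are $0$, so sorting puts $\ell\,\overline{a}_{[m]}$ in position $m$ and zeros afterwards, giving $\mathrm{OWA}_{\pmb{w}}(\pmb{a}^{\max})=\ell\sum_{m=1}^{K/\ell}w_m\,\overline{a}_{[m]}$. It then remains to compare $\sum_m w_m\overline{a}_{[m]}$ with $\sum_m\overline{w}_m\overline{a}_{[m]}=\mathrm{OWA}_{\overline{\pmb{w}}}(\overline{\pmb{a}})$. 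Applying Abel summation to both sums against the nonincreasing, nonnegative sequence $\overline{a}_{[1]}\ge\cdots\ge\overline{a}_{[K/\ell]}$ turns them into $\sum_m(\overline{a}_{[m]}-\overline{a}_{[m+1]})W_m$ and $\sum_m(\overline{a}_{[m]}-\overline{a}_{[m+1]})\overline{W}_m$, where $W_m=\sum_{i\le m}w_i$ and $\overline{W}_m=\sum_{i\le m}\overline{w}_i=W_{m\ell}$. Since $\overline{a}_{[m]}-\overline{a}_{[m+1]}\ge0$ and, by the very definition of $\rho$, $W_m\le\rho\,\overline{W}_m$ for every $m\in[K/\ell]$, a termwise comparison gives $\sum_m w_m\overline{a}_{[m]}\le\rho\sum_m\overline{w}_m\overline{a}_{[m]}$; multiplying by $\ell$ and chaining with the right inequality of the sandwich yields~(b).

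The routine parts are the sorting bookkeeping and the Abel summation; the step that needs care is the monotonicity sandwich, i.e.\ checking that moving from the flat vector $\pmb{b}$ up to the concentrated vector $\pmb{a}^{\max}$, with $\pmb{a}$ in between, decomposes into within-block mass transfers from a lower-sorted to a higher-sorted coordinate, each of which increases $\mathrm{OWA}_{\pmb{w}}$ by Lemma~\ref{lem01}. One must also note that Lemma~\ref{lem01}, stated for moving the whole of one coordinate, equally covers partial transfers, since its proof rests only on $(w_i-w_j)\delta\ge0$. I expect the interplay between the block structure (fixed by the original indexing) and the global sorted order used inside $\mathrm{OWA}_{\pmb{w}}$ to be the only genuinely delicate point.
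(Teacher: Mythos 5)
Your proof is correct, and while it shares the paper's central device for inequality~(\ref{lem1ab}b) --- concentrating each block's total on a single coordinate (your $\pmb{a}^{\max}$ is the paper's $\pmb{a}'$) and iterating Lemma~\ref{lem01} to show this can only increase $\mathrm{OWA}_{\pmb{w}}$ --- it replaces both supporting steps with something more elementary. For~(\ref{lem1ab}a) the paper applies Chebyshev's sum inequality (Lemma~\ref{lem0}) blockwise and then the rearrangement inequality; you instead flatten each block to its mean, read off the exact identity $\mathrm{OWA}_{\pmb{w}}(\pmb{b})=\mathrm{OWA}_{\overline{\pmb{w}}}(\overline{\pmb{a}})$, and invoke Schur-convexity of $\mathrm{OWA}_{\pmb{w}}$ (a maximum of linear functionals, hence symmetric and convex) under the doubly stochastic blockwise-averaging map. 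The two are equivalent in substance --- the paper's Chebyshev step is exactly this instance of Schur-convexity computed by hand --- but yours leans on the standard majorization toolkit (T-transform decomposition) where the paper is self-contained, so if you write this up you should either cite Hardy--Littlewood--P\'olya or fall back on the Birkhoff--von Neumann argument you sketch. The more substantial divergence is in bounding $\sum_m w_m\overline{a}_{[m]}$ by $\rho\sum_m\overline{w}_m\overline{a}_{[m]}$: the paper sets up a fractional program, linearizes, dualizes, and extracts the constant from complementary slackness with a case analysis on which primal variables vanish, whereas your Abel summation against the nonincreasing nonnegative sequence $\overline{a}_{[1]}\ge\cdots\ge\overline{a}_{[K/\ell]}$ reduces everything to the termwise inequality $W_m\le\rho\,\overline{W}_m$, which is literally the definition of $\rho$. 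This is shorter, avoids LP duality entirely, and makes transparent why $\rho$ is the right constant (it is the worst partial-sum ratio). The point you flag as delicate --- that Lemma~\ref{lem01} is stated for full transfers but its proof covers partial ones, and that donor and recipient must be ordered consistently with the global sorted positions so that $w_i\ge w_j$ applies --- is handled in the paper only by the initial renumbering convention, so your treatment is no less rigorous than the original on that score.
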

\begin{proof}
Let us renumber the elements of $\pmb{a}$ so that $\overline{a}_1\geq \overline{a}_2\geq \cdots \geq \overline{a}_{K/\ell}$. 
Futhermore, within each component $\overline{a}_k$, the elements are indexed so that they form a nonincreasig sequence. For example,  let $\pmb{a}=(5,1,3,6,0,6,2,0,1)$ and $\ell=3$. After aggregation we get $\frac{1}{3}(5+1+3) = 3$, $\frac{1}{3}(6+0+6) = 4$, and $\frac{1}{3}(2+0+1) = 1$. 
Hence, after renumbering the elements, we get $\pmb{a}=(6,6,0, 5,3,1,2,1,0)$. Let $J=\{1,\ell+1,2\ell+1,\dots,K-\ell+1\}$.  
Thus we have
$$\mathrm{OWA}_{\pmb{\overline{w}}}(\pmb{\overline{a}})=
\frac{1}{\ell}\sum_{k\in J}(w_k+\dots+w_{k+\ell-1})(a_k+\dots+a_{k+\ell-1}).
$$
Let us first prove inequality~(\ref{lem1ab}a). From Lemma~\ref{lem0}, we get
$$\mathrm{OWA}_{\pmb{\overline{w}}}(\pmb{\overline{a}})
 \leq \sum_{k\in J}(w_ka_k+\dots+w_{k+\ell-1}a_{k+\ell-1})=\sum_{k\in [K]} w_k a_k\leq \mathrm{OWA}_{\pmb{w}}(\pmb{a}).$$
We now prove  inequality~(\ref{lem1ab}b).
Consider component $(a_{(k-1)\ell+1}+\dots+a_{k \ell})$, $k\in [K/\ell]$.  
Define $a'_{(k-1)\ell+1}=a_{(k-1)\ell+1}+\dots+a_{k \ell}$ and $a'_{(k-1)\ell+2}=\dots=a'_{k \ell}=0$. 
For $\pmb{a}=(6,6,0, 5,3,1,2,1,0)$, we get $\pmb{a}'=(12,0,0,9,0,0,3,0,0)$. It is easy to see that $\overline{\pmb{a}} = \overline{\pmb{a}}'$ and therefore $\mathrm{OWA}_{\pmb{\overline{w}}}(\pmb{\overline{a}})=\mathrm{OWA}_{\pmb{\overline{w}}}(\pmb{\overline{a}}')$. 
Using iteratively Lemma~\ref{lem01}, we obtain
$\mathrm{OWA}_{\pmb{w}}(\pmb{a}')\geq \mathrm{OWA}_{\pmb{w}}(\pmb{a})$.
This yields
$$\mathrm{OWA}_{\pmb{w}}(\pmb{a})\leq \mathrm{OWA}_{\pmb{w}}(\pmb{a}')=w_1a_1'+w_2a_{\ell+1}'+\dots+w_{K/\ell}a_{K-\ell+1}'$$
since $a_1'\geq a_{\ell+1}'\geq\dots\geq a_{K-\ell+1}'$ and $a'_k=0$ for the remaining elements. 
Let us rewrite
 $$\ell\mathrm{OWA}_{\pmb{\overline{w}}}(\pmb{\overline{a}}')=\overline{w}_1a_1'+\overline{w}_2a_{\ell+1}'+\dots+\overline{w}_{K/\ell}a'_{K-\ell+1}.$$
 We now estimate  from above the following ratio: 
 $$\frac{\mathrm{OWA}_{\pmb{w}}(\pmb{a}')}{\ell\mathrm{OWA}_{\pmb{\overline{w}}}(\pmb{\overline{a}'})}=\frac{w_1a_1'+w_2a_{\ell+1}'+\dots+w_{K/\ell}a_{K-\ell+1}'}{\overline{w}_1a_1'+\overline{w}_2a_{\ell+1}'+\dots+\overline{w}_{K/\ell}a'_{K-\ell+1}}.$$
We use the following fractional programming problem:
\begin{equation}
\label{fr0}
 	\begin{array}{lll}
	\max z= & \frac{w_1x_1+w_2x_2+\dots+w_{K/\ell}x_{K/\ell}}{\overline{w}_1x_1+\overline{w}_2x_2+\dots+\overline{w}_{K/\ell}x_{K/\ell}} \\
	& x_{k+1}-x_k\leq 0 & k\in [K/\ell-1]\\
	& x_k\geq 0 & k\in [K/\ell]
	\end{array}
 \end{equation}
where $x_1$ corresponds to $a_1'$, $x_2$ to $a_{\ell+1}'$ etc. The constraints ensure that $x_1\geq x_2\geq \dots \geq x_{K/\ell}$ form a nonincreasing sequence of nonnegative numbers.
Under the assumption that $(x_1,\dots,x_{K/\ell})$ is a nonzero vector,  problem (\ref{fr0}) is equivalent to the following linear programming program (with dual variables in brackets):
 $$
 	\begin{array}{llll}
	\max z= & w_1x_1+w_2x_2+\dots+w_{K/\ell}x_{K/\ell} \\
	& x_{k+1}-x_k\leq 0 & k\in [K/\ell-1] & [\alpha_k]\\
	&\overline{w}_1x_1+\overline{w}_2x_2+\dots+\overline{w}_{K/\ell} x_{K/\ell}=1 & & [\gamma]\\
	& x_k\geq 0 & k\in [K/\ell]
	\end{array}
 $$
 The dual is
  $$
 	\begin{array}{llll}
	\min z'= & \gamma \\
	& \gamma\overline{w}_1-\alpha_1\geq w_1 & [x_1]\\
	& \gamma\overline{w}_2+\alpha_1-\alpha_2\geq w_2 & [x_2]\\
	& \gamma\overline{w}_3+ \alpha_2 -\alpha_3 \geq w_3 & [x_3] \\
	& \vdots \\
	& \gamma \overline{w}_{K/\ell}+ \alpha_{K/l-1} \geq w_{K/\ell} & [x_{K/\ell}] \\
	&\alpha_i\geq 0 & i\in [K/\ell-1]
	\end{array}
 $$
Assume that $(x_1^*,x_2^*,\dots,x_{K/\ell}^*)$ is an optimal primal solution and $(\gamma^*, \alpha^*_1,\dots,\alpha^*_{K/\ell-1})$ is an optimal dual solution. If all the primal variables are positive, then according 
to the complementary slackness condition (see, e.g.,~\cite{PS98}), all 
the dual constraints must be tight. Adding them, we get 
$z'=\gamma^*=(w_1+\dots+w_{K/\ell}) / (\overline{w}_1+\dots+\overline{w}_{K/\ell})\leq \rho$.
 Let $x_1^*,x_2^*,\dots,x^*_t>0$ and $x^*_{t+1}=\dots =x^*_{K/\ell}=0$ for some  $t\geq 1$ and $t<[K/\ell]$. 
 Again by the complementary slackness condition, the first $t$ constraints in the dual must be tight. Furthermore, because $x^*_{t+1}-x^*_{t}<0$, the dual variable $\alpha^*_t=0$. Adding the first $t$ dual constraints yields
$$\gamma^*(\overline{w}_1+\overline{w}_2+\dots+\overline{w}_{t})=w_1+w_2+\dots + w_t,$$
Hence $z'=\gamma^*=\frac{\sum_{k=1}^t w_k}{\sum_{k=1}^{t}\overline{w}_k}\leq \rho$, since $t\in [K/\ell]$. Finally
$$
\frac{\mathrm{OWA}_{\pmb{w}}(\pmb{a})}{\ell\mathrm{OWA}_{\pmb{\overline{w}}}(\pmb{\overline{a}'})}\leq
\frac{\mathrm{OWA}_{\pmb{w}}(\pmb{a}')}{\ell\mathrm{OWA}_{\pmb{\overline{w}}}(\pmb{\overline{a}'})} \leq \rho
$$
and  inequality~(\ref{lem1ab}b) holds. 
\end{proof}

\subsection{Aggregation algorithm}
\label{sec2-2}

We are now ready to apply the aforementioned aggregation to
the problem $\textsc{OWA}~\mathcal{P}$ with nonincreasing weights $w_1\geq w_2\geq \dots \geq w_K$. 
We can assume that $K$ is a multiple of $\ell\geq 1$;  otherwise we add a necessary number of dummy objectives with 0 
costs for all variables and the weights in OWA criterion equal to~0. 
Let $\pmb{C}$ be the $n\times K$ matrix with the columns $\pmb{c}_1,\dots,\pmb{c}_K$. Let us construct an
$n\times K/\ell$ matrix $\overline{\pmb{C}}$ by aggregating the columns of $\pmb{C}$. 
Namely, $\overline{\pmb{C}}$ has columns $\overline{\pmb{c}}_1,\dots,\overline{\pmb{c}}_{K/\ell}$, where $\overline{\pmb{c}}_k=\frac{1}{\ell}(\pmb{c}_{(k-1) \ell+1}+\dots+{\pmb{c}}_{k \ell})$, $k\in [K/\ell]$.
An example for $\ell=2$ is shown in Table~\ref{tab1}. After the aggregation we get  
$\overline{\pmb{F}}(\pmb{x})=(\overline{\pmb{c}}^T_1\pmb{x}, \dots,\overline{\pmb{c}}^T_{K/l}\pmb{x})$, 
where $\overline{\pmb{c}}^T_k\pmb{x}=\frac{1}{\ell}(\pmb{c}^T_{(k-1) \ell+1}\pmb{x}+\dots+{\pmb{c}}^T_{k \ell}\pmb{x})$, 
$k\in [K/\ell]$.
Assume that $\overline{\pmb{x}}$ minimizes $\mathrm{OWA}_{\overline{\pmb{w}}}(\overline{\pmb{F}}(\pmb{x}))$.
The following lemma characterizes the computed solution~$\overline{\pmb{x}}$.
\begin{lemma}
\label{thm1}
Given any $\pmb{x}\in \X$ and nonincreasing weights~$\pmb{w}$. Then
\begin{equation}
    \mathrm{OWA}_{\pmb{w}}(\pmb{F}(\overline{\pmb{x}}))\leq \ell\rho \cdot \mathrm{OWA}_{\pmb{w}}(\pmb{F}(\pmb{x})).
\end{equation}
\end{lemma}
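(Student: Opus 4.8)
The plan is to reduce the claim to the two bounds already established in Lemma~\ref{lem1}, using the optimality of $\overline{\pmb{x}}$ for the reduced problem as the bridge. The crucial observation I would record first is a linearity (commutation) identity: for every $\pmb{x}\in\X$, aggregating the objective-value vector $\pmb{F}(\pmb{x})$ in the sense of Section~\ref{sec2} coincides with evaluating the aggregated cost vectors at $\pmb{x}$. Indeed, setting $\pmb{a}=\pmb{F}(\pmb{x})$ so that $a_k=\pmb{c}_k^T\pmb{x}$, the $k$th aggregated component is $\overline{a}_k=\frac{1}{\ell}(a_{(k-1)\ell+1}+\dots+a_{k\ell})=\frac{1}{\ell}(\pmb{c}^T_{(k-1)\ell+1}\pmb{x}+\dots+\pmb{c}^T_{k\ell}\pmb{x})=\overline{\pmb{c}}_k^T\pmb{x}$, so that $\overline{\pmb{a}}=\overline{\pmb{F}}(\pmb{x})$. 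This lets me transport the abstract inequalities of Lemma~\ref{lem1}, stated there for a vector $\pmb{a}$ and its aggregate $\overline{\pmb{a}}$, directly to the objective vectors $\pmb{F}(\pmb{x})$ and $\overline{\pmb{F}}(\pmb{x})$.

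With this identity in hand I would chain three inequalities. First, applying inequality~(\ref{lem1ab}b) to $\pmb{a}=\pmb{F}(\overline{\pmb{x}})$ gives $\mathrm{OWA}_{\pmb{w}}(\pmb{F}(\overline{\pmb{x}}))\le \ell\rho\cdot\mathrm{OWA}_{\overline{\pmb{w}}}(\overline{\pmb{F}}(\overline{\pmb{x}}))$. Second, since $\overline{\pmb{x}}$ is by construction a minimizer of $\mathrm{OWA}_{\overline{\pmb{w}}}(\overline{\pmb{F}}(\cdot))$ over $\X$, for the given $\pmb{x}$ we have $\mathrm{OWA}_{\overline{\pmb{w}}}(\overline{\pmb{F}}(\overline{\pmb{x}}))\le \mathrm{OWA}_{\overline{\pmb{w}}}(\overline{\pmb{F}}(\pmb{x}))$. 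Third, applying inequality~(\ref{lem1ab}a) to $\pmb{a}=\pmb{F}(\pmb{x})$ yields $\mathrm{OWA}_{\overline{\pmb{w}}}(\overline{\pmb{F}}(\pmb{x}))\le \mathrm{OWA}_{\pmb{w}}(\pmb{F}(\pmb{x}))$. Composing the three produces exactly the claimed bound $\mathrm{OWA}_{\pmb{w}}(\pmb{F}(\overline{\pmb{x}}))\le \ell\rho\cdot\mathrm{OWA}_{\pmb{w}}(\pmb{F}(\pmb{x}))$.

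The only point requiring care is the nonzero hypothesis of Lemma~\ref{lem1}. Since all costs are nonnegative, each $\pmb{F}(\pmb{x})$ is a nonnegative vector; if $\pmb{F}(\overline{\pmb{x}})$ is the zero vector the left-hand side is $0$ and the inequality is trivial, and otherwise Lemma~\ref{lem1} applies to it, and symmetrically for $\pmb{F}(\pmb{x})$ in the application of inequality~(\ref{lem1ab}a). I do not expect a genuine obstacle here: all the analytic content — in particular the fractional-programming and LP-duality argument that produces the constant $\rho$ — has already been carried out in Lemma~\ref{lem1}, so this proof reduces to recording the commutation identity and then a short three-line chaining of optimality with the two bounds.
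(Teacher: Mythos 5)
Your argument is exactly the paper's proof: the authors chain the same three inequalities, written as $\mathrm{OWA}_{\pmb{w}}(\pmb{F}(\pmb{x}))\geq \mathrm{OWA}_{\overline{\pmb{w}}}(\overline{\pmb{F}}(\pmb{x}))\geq \mathrm{OWA}_{\overline{\pmb{w}}}(\overline{\pmb{F}}(\overline{\pmb{x}}))\geq \frac{1}{\ell\rho}\,\mathrm{OWA}_{\pmb{w}}(\pmb{F}(\overline{\pmb{x}}))$, which is your chain read in the opposite direction. Your explicit remarks on the commutation identity $\overline{\pmb{F}(\pmb{x})}=\overline{\pmb{F}}(\pmb{x})$ and the zero-vector edge case are details the paper leaves implicit, but they do not change the approach.
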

\begin{proof}
By Lemma~\ref{lem1} we get 
$$\mathrm{OWA}_{\pmb{w}}(\pmb{F}(\pmb{x}))\geq \mathrm{OWA}_{\overline{\pmb{w}}}(\overline{\pmb{F}}(\pmb{x}))\geq \mathrm{OWA}_{\overline{\pmb{w}}}(\overline{\pmb{F}}(\overline{\pmb{x}}))\geq \frac{1}{\ell\rho} \cdot \mathrm{OWA}_{\pmb{w}}(\pmb{F}(\overline{\pmb{x}}))$$
and the lemma follows.
\end{proof}

We are thus led to the following
$\ell$-\textsc{Aggregation Algorithm}: given an instance of $\textsc{OWA}~\mathcal{P}$ with nonincreasing weights $w_1\geq w_2\geq \dots \geq w_K$ and $K$ objectives $\pmb{c}^T_1\pmb{x},\dots,\pmb{c}^T_K\pmb{x}$, solve the corresponding instance of $\textsc{OWA}~\mathcal{P}$ with weights $\overline{w}_1\geq \dots \geq \overline{w}_{K/\ell}$ and objectives 
$\overline{\pmb{c}}^T_1\pmb{x},\dots,\overline{\pmb{c}}^T_{K/\ell}\pmb{x}$. 
From Lemma~\ref{thm1} we immediately get the following result:
\begin{theorem}
\label{thm2}
	The $\ell$-\textsc{Aggregation Algorithm} has an approximation ratio of $\rho \ell$, where $\rho=\max_{k\in [K/\ell]}\frac{\sum_{i=1}^k w_i}{\sum_{i=1}^{k} {\overline{w}_i}}$.
\end{theorem}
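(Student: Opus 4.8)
The plan is to obtain Theorem~\ref{thm2} as an immediate corollary of Lemma~\ref{thm1}, since essentially all of the quantitative work has already been carried out there (Chebyshev's inequality for the lower bound~(\ref{lem1ab}a) and the fractional-programming / LP-duality argument for the upper bound~(\ref{lem1ab}b)). First I would recall what an approximation ratio means for a minimization problem. Let $\overline{\pmb{x}}$ be the solution returned by the $\ell$-\textsc{Aggregation Algorithm}, i.e. a minimizer of $\mathrm{OWA}_{\overline{\pmb{w}}}(\overline{\pmb{F}}(\pmb{x}))$, and let $\pmb{x}^*$ denote an optimal solution of the original instance, so that $\mathrm{OWA}_{\pmb{w}}(\pmb{F}(\pmb{x}^*))=\min_{\pmb{x}\in\X}\mathrm{OWA}_{\pmb{w}}(\pmb{F}(\pmb{x}))$. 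The quantity to be bounded is the ratio $\mathrm{OWA}_{\pmb{w}}(\pmb{F}(\overline{\pmb{x}}))/\mathrm{OWA}_{\pmb{w}}(\pmb{F}(\pmb{x}^*))$.

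Before invoking the lemma I would dispatch two consistency points that make the statement meaningful. First, the aggregation modifies only the cost matrix and the weight vector, leaving the feasible set $\X$ untouched; hence the minimizer $\overline{\pmb{x}}$ of the reduced instance is a genuine feasible solution of the original problem and $\mathrm{OWA}_{\pmb{w}}(\pmb{F}(\overline{\pmb{x}}))$ is well defined. Second, since $w_1\geq\cdots\geq w_K$, each aggregated weight $\overline{w}_k$ sums $\ell$ original weights that termwise dominate those summed by $\overline{w}_{k+1}$, so $\overline{w}_1\geq\cdots\geq\overline{w}_{K/\ell}$ and the reduced problem is again an instance of $\textsc{OWA}~\mathcal{P}$ with nonincreasing weights; thus the algorithm is well posed and Lemma~\ref{thm1} applies to it.

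The decisive step is then simply to specialize Lemma~\ref{thm1} to the choice $\pmb{x}=\pmb{x}^*$. Because that lemma holds for \emph{every} $\pmb{x}\in\X$, evaluating it at the optimum gives
$$\mathrm{OWA}_{\pmb{w}}(\pmb{F}(\overline{\pmb{x}}))\leq \ell\rho\cdot\mathrm{OWA}_{\pmb{w}}(\pmb{F}(\pmb{x}^*)),$$
and dividing by the optimal value yields a worst-case ratio of at most $\rho\ell$. The costs are nonnegative, so $\pmb{F}(\pmb{x}^*)$ is nonnegative as required by Lemma~\ref{lem1}; the only degenerate case is the all-zero objective vector, where both sides vanish and the bound holds trivially. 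I therefore expect no real obstacle at this stage: the genuine difficulty was already absorbed in proving Lemma~\ref{lem1}, and Theorem~\ref{thm2} is precisely the corollary obtained by reading the per-vector guarantee of Lemma~\ref{thm1} at an optimal solution.
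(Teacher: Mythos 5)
Your proposal is correct and matches the paper's own argument: Theorem~\ref{thm2} is stated there as an immediate consequence of Lemma~\ref{thm1}, obtained exactly by evaluating that per-solution guarantee at an optimal $\pmb{x}^*$. Your additional checks (feasible set unchanged, aggregated weights still nonincreasing, nonnegativity of costs) are correct but routine elaborations of what the paper leaves implicit.
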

Let us now analyze the quality of a  solution returned by the $\ell$-\textsc{Aggregation Algorithm}.
 It is easy to check that  $\rho\in [\frac{1}{\ell},1]$. 
 The bound $\rho\geq \frac{1}{\ell}$ follows from the fact that $\frac{w_1}{\overline{w}_1}\geq \frac{1}{\ell}$ (which can be rewritten as $\ell w_1\geq (w_1+w_2+\dots+w_\ell)$).
Furthermore, $\rho=\frac{1}{\ell}$ if and only if the weights are uniform, i.e. $w_k=\frac{1}{K}$ for each $k\in [K]$. This corresponds to the case when OWA is the average. Of course, the aggregation preserves then the optimality of the solution.
On the other hand, $\rho=1$  if and only if $w_1+w_2+\dots+w_{K/\ell}=1$, i.e. when 1 is allocated to the largest 
$K/\ell$ 
(OWA is the $K/\ell$-centra criterion).  This is true, for example, when $w_1=1$, i.e. when OWA is the maximum. In this case, the algorithm returns an $\ell$-approximate solution, which is a generalization of the results from \cite{CG18}.
Observe also, that when $\ell=K$, then $\rho=w_1$ and the algorithm has an  approximation ratio of~$w_1K$, which is the same as the one obtained in~\cite{KZ15}. The aggregation for $\ell=K$ results in one objective, in which each cost is just the average cost over all objectives. Such an aggregation  is different than the one proposed in~\cite{KZ15}. However, both of them result in the same approximation guarantee $w_1K$.

We now give a brief numerical discussion of the quality of our bound~$\rho \ell$
(see Theorem~\ref{thm2}). Consider a problem with $K=200$. The weight vectors~$\pmb{w}$ are obtained using a function that takes a parameter $\alpha\in (0,1)$ (for more details, see~\eqref{genfun} in Section~\ref{sec4}). The larger this parameter is, the less distorted is the weight distribution, i.e. it becomes closer to a uniform weight distribution. 
Table~\ref{tab-theory} presents  values of  the bound for different values of $\ell$ and $\alpha$.
\begin{table}[h]
\caption{Values of the bound from Theorem~\ref{thm2} for $K=200$.}\label{tab-theory}
\centering\begin{tabular}{r|rrrrrrr}
 & \multicolumn{7}{c}{$\ell$} \\
 & 2 & 5 & 10 & 20 & 50 & 100 & 200\\
\hline
\multirow{ 3}{*}{$\alpha$} $10^{-2}$ & $1.52$ & $2.05$ & $2.29$ & $2.42$ &  $2.50$ & $2.53$ & $2.54$ \\
$10^{-3}$ & $1.94$ & $3.75$ & $4.99$ & $5.85$ & $6.46$ & $6.68$ & $6.80$ \\
$10^{-6}$ & $2.00$ & $4.68$ & $7.49$ & $9.98$ & $12.07$ & $12.90$ & $13.35$
\end{tabular}
\end{table}

We see at once that  the approximation guarantees  are better for less distorted weights. 
When $\alpha=10^{-6}$, the approximation algorithm proposed in~\cite{KZ15} 
has a worst-case ratio of~$13.35$. We can reduce it to $12.90$,  by choosing $\ell=100$. 
As the result, we get a problem with only two objectives, which can be solved to optimality in 
reasonable time by MIP solvers.
 We can also use smaller values of $\ell$. By setting~$\ell=2$, we  can reduce the number of objectives 
 by $50\%$ and after solving the obtained instance, we get a~2.00 approximate solution. If $\ell=5$, then we reduce the number of objectives  by $80\%$ and after solving the resulting instance we get a $4.68$ approximate solution.  We thus can see that the aggregation allows us to establish a trade-off between the running time of an exact algorithm and the quality the obtained solutions. Notice that the worst-case ratio is only theoretical and the $\ell$-\textsc{Aggregation Algorithm}  may behave much better in practice (we will test it in more detail in Section~\ref{sec4}).

We continue in this fashion and give
 a sample worst instance for the $\ell$-\textsc{Aggregation Algorithm}, where $\ell=2$.
Consider a problem with~4 variables and 8~objectives, shown in Table~\ref{tab1}. 
\begin{table}[ht]
\caption{(a) A sample problem with $n=4$,  $K=8$ and $\pmb{w}=(0.2,0.2,0.1,0.1,0.1,0.1,0.1,0.1)$. 
(b) The problem after the aggregation with $\ell=2$ with $\overline{\pmb{w}}=(0.4, 0.2, 0.2, 0.2)$.} \label{tab1}
\centering
\begin{tabular}{ccc}
\begin{tabular}{l l|ll | ll | ll | lll}
(a)	& & $\pmb{c}_1$ & $\pmb{c}_2$ & $\pmb{c}_3$  & $\pmb{c}_4$  & $\pmb{c}_5$  & $\pmb{c}_6$  & $\pmb{c}_7$  & $\pmb{c}_8$  \\
 \cline{2-10}
&$x_1$ & 1 & 0 & 1 & 0 & 0 & 0 & 0 & 0 \\
&$x_2$ & 0 & 1 & 0 & 1 & 0 & 0 & 0 & 0 \\ 
&$x_3$ & 1 & 0 & 1 & 0 & 0 & 0 & 0 & 0 \\ 
&$x_4$ & 0 & 1 & 0 & 1 & 0 & 0 & 0 & 0 \\
\end{tabular}
&
\begin{tabular}{ll | llll}
 (b)	& & $\overline{\pmb{c}}_1$ & $\overline{\pmb{c}}_2$ & $\overline{\pmb{c}}_3$  & $\overline{\pmb{c}}_4$   \\
\cline{2-6}	
 &$x_1$ &0.5 & 0.5 & 0 & 0 \\
&$x_2$ & 0.5 & 0.5 & 0 & 0 \\ 
&$x_3$ & 0.5 & 0.5 & 0 & 0  \\ 
&$x_4$ & 0.5 & 0.5 & 0 & 0 \\
\end{tabular}
\end{tabular}
\end{table}
Assume that $\mathcal{X}$ contains all 0-1 solutions satisfying the constraint $x_1+x_2+x_3+x_4=2$.
 After the aggregation, each feasible solution has the same cost under each objective (see Table~\ref{tab1}(b)).
Choose solution $x_1=1$, $x_2=1$, $x_3=0$, $x_4=0$ with 
$\pmb{F}(\pmb{x})=(1,1,1,1,0,0,0,0)$ and $\mathrm{OWA}_{\pmb{w}}(\pmb{F}(\pmb{x}))=w_1+w_2+w_3+w_4$. 
On the other hand, for solution $x'_1=1$, $x'_3=1$, $x'_2=0$, $x'_4=0$ we have $\pmb{F}(\pmb{x}')=(2,0,2,0,0,0,0,0)$ and $\mathrm{OWA}_{\pmb{w}}(\pmb{F}(\pmb{x}'))=2(w_1+w_2)$. Hence, $\mathrm{OWA}_{\pmb{w}}(\pmb{F}(\pmb{x}'))=2\rho \mathrm{OWA}_{\pmb{w}}(\pmb{F}(\pmb{x}))$, where $\rho=\frac{w_1+w_2}{w_1+w_2+w_3+w_4}$. For the sample weight vector shown in Table~\ref{tab1}, we get $\rho=\frac{2}{3}$ and the 2-\textsc{Aggregation Algorithm} may return a $\frac{4}{3}$-approximate solution.
 It is not difficult to extend this bad example for any $\ell$.

The $\ell$-\textsc{Aggregation Algorithm} allows us to establish another theoretical approximation result, which is analogous to the one obtained in~\cite{CG18}.
It has been shown in~\cite{KZ15} that when the number of objectives is constant, then $\textsc{OWA}~\mathcal{P}$ admits an FPTAS for some particular problems $\mathcal{P}$ (for example, when $\mathcal{P}$ is the shortest path problem). This means, among others, that if $K$ is constant, then $\textsc{OWA}~\mathcal{P}$ has a polynomial 2-approximation algorithm.
Without loss of generality we can assume that that $K=2^r$ for some $r>1$. 
 We will consider $\ell$-\textsc{Aggregation Algorithm} for $\ell=2,4,\dots,2^r$. If $\ell=2^{r-j}$ then we will are at \emph{$j$th level} of aggregation. Using Theorem~\ref{thm2}, we get that the $j$th level of aggregation gives us $2^{r-j}\rho=(\rho /2^{j})K$ approximate solution. Let us fix a constant $\epsilon\in (0,1)$ and choose $\overline{j}=\lceil \log(1/\epsilon)+1\rceil$. As a result we get a problem with a fixed number of objectives equal to $2^{\overline{j}}$. We can now apply a 2-approximation algorithm to this problem obtaining a $2\cdot (\rho /2^{\overline{j}})K\leq \epsilon K$-approximate solution (since $\rho\leq 1$). The following theorem summarizes the above reasoning:
\begin{theorem}
\label{thm3}
If $\textsc{OWA}~\mathcal{P}$ has a polynomial 2-approximation algorithm for a constant number of objectives, then it also has a polynomial $\epsilon K$-approximation algorithm for each constant $\epsilon>0$.
\end{theorem}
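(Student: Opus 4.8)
The plan is to \emph{compose} the $\ell$-\textsc{Aggregation Algorithm} of Theorem~\ref{thm2} with the assumed polynomial $2$-approximation algorithm for a constant number of objectives. Assuming $K=2^r$ (otherwise pad with dummy zero-cost objectives of weight $0$, as in Section~\ref{sec2-2}), I would first fix the target number of reduced objectives to be a \emph{constant} $2^{\overline{j}}$ and aggregate the original $K$ objectives down to $K/\ell=2^{\overline{j}}$ objectives, i.e.\ take $\ell=2^{r-\overline{j}}=K/2^{\overline{j}}$. Because the reduced instance has only a constant number of objectives, the hypothesised $2$-approximation algorithm runs on it in polynomial time; let $\tilde{\pmb{x}}$ denote the solution it returns, and let $\pmb{x}^*$ be an optimal solution of the original $\textsc{OWA}~\mathcal{P}$. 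The remaining task is to bound $\mathrm{OWA}_{\pmb{w}}(\pmb{F}(\tilde{\pmb{x}}))$ against $\mathrm{OWA}_{\pmb{w}}(\pmb{F}(\pmb{x}^*))$ and then to pick $\overline{j}$ so that the ratio stays below $\epsilon K$.

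The key step is the composition bound, which does \emph{not} follow directly from Theorem~\ref{thm2}, since that theorem assumes the reduced problem is solved exactly, whereas here only a $2$-approximation is available. I would re-derive it from both inequalities of Lemma~\ref{lem1}. Writing $\overline{\pmb{x}}$ for an optimal solution of the reduced problem, the approximation guarantee gives $\mathrm{OWA}_{\overline{\pmb{w}}}(\overline{\pmb{F}}(\tilde{\pmb{x}}))\leq 2\,\mathrm{OWA}_{\overline{\pmb{w}}}(\overline{\pmb{F}}(\overline{\pmb{x}}))$, and then the chain
\begin{align*}
\mathrm{OWA}_{\pmb{w}}(\pmb{F}(\tilde{\pmb{x}}))
&\leq \ell\rho\,\mathrm{OWA}_{\overline{\pmb{w}}}(\overline{\pmb{F}}(\tilde{\pmb{x}}))
\leq 2\ell\rho\,\mathrm{OWA}_{\overline{\pmb{w}}}(\overline{\pmb{F}}(\overline{\pmb{x}})) \\
&\leq 2\ell\rho\,\mathrm{OWA}_{\overline{\pmb{w}}}(\overline{\pmb{F}}(\pmb{x}^*))
\leq 2\ell\rho\,\mathrm{OWA}_{\pmb{w}}(\pmb{F}(\pmb{x}^*))
\end{align*}
yields $\mathrm{OWA}_{\pmb{w}}(\pmb{F}(\tilde{\pmb{x}}))\leq 2\ell\rho\,\mathrm{OWA}_{\pmb{w}}(\pmb{F}(\pmb{x}^*))$. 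Here the first inequality is (\ref{lem1ab}b) applied to $\pmb{a}=\pmb{F}(\tilde{\pmb{x}})$, the third uses optimality of $\overline{\pmb{x}}$ for the reduced instance, and the last is (\ref{lem1ab}a) applied to $\pmb{a}=\pmb{F}(\pmb{x}^*)$. Thus the composed algorithm has ratio $2\ell\rho$, exactly twice the ratio of Theorem~\ref{thm2}.

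It then remains only to choose the aggregation level. Substituting $\ell=K/2^{\overline{j}}$ gives ratio $2\ell\rho=(2/2^{\overline{j}})\rho K$, and taking $\overline{j}=\lceil\log(1/\epsilon)+1\rceil$ forces $2^{\overline{j}}\geq 2/\epsilon$, so $2/2^{\overline{j}}\leq\epsilon$; since $\rho\leq 1$ (noted after Theorem~\ref{thm2}) the ratio is at most $\epsilon K$, as required. I expect the main obstacle to be precisely the composition step above: one must verify that the two approximation factors \emph{multiply} rather than combine in some worse way, which hinges on invoking (\ref{lem1ab}b) in the expanding direction for $\tilde{\pmb{x}}$ and (\ref{lem1ab}a) in the contracting direction for $\pmb{x}^*$, sandwiching the $2$-approximation and the optimality of $\overline{\pmb{x}}$ in between. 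A secondary point worth confirming explicitly is that $2^{\overline{j}}$ depends only on $\epsilon$ and not on $K$, so that it is constant and the assumed $2$-approximation algorithm is indeed polynomial on the reduced instance.
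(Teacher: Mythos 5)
Your proposal is correct and follows essentially the same route as the paper: aggregate to a constant number $2^{\overline{j}}$ of objectives with $\overline{j}=\lceil\log(1/\epsilon)+1\rceil$, apply the assumed $2$-approximation, and combine the factors multiplicatively to get $2\cdot(\rho/2^{\overline{j}})K\leq\epsilon K$. The only difference is that you spell out the composition chain via both inequalities of Lemma~\ref{lem1} explicitly, whereas the paper simply asserts that the $2$-approximation factor multiplies with the $\ell\rho$ aggregation ratio; your verification is sound and fills in that (minor) gap.
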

Theorem~\ref{thm3} is a theoretical result. It could be useful if we had more efficient 2-approximation algorithm for a fixed number of objectives. Currently,  2-approximation algorithms are based on the existence of FPTAS for fixed~$K$, which unfortunately are exponential in~$K$ (see~\cite{ABV10}).

\subsection{Heuristic aggregation}
\label{sec2-3}

The results from Section~\ref{sec2-2} apply to any order of objectives. In practice, it may be reasonable to order the objectives in a way that ''similar'' functions are aggregated together. This fact can be observed for the bad instance shown in Table~\ref{tab1}. If we exchange vectors $\pmb{c}_2$ and $\pmb{c}_3$ in Table~\ref{tab1}(a) and again aggregate with $\ell=2$, then the optimality of solution will be preserved. The reason is that exactly the same two objectives are then aggregated in every group.  In this section we propose a heuristic aggregation method, which does not fit into the theoretical framework presented in the previous section, but will be used for comparison in the experimental section.

Different approaches can be used to define similarity of objective functions. In the following, we will use the Euclidean norm $d(\pmb{c}_i,\pmb{c}_j) = \| \pmb{c}_i - \pmb{c}_j \|_2$ between the respective objective function coefficients. Given $K$ objectives, the aim is to aggregate them to a specified target value $\overline{K}<K$ of objectives. Namely, we wish to form $\overline{K}$ groups (clusters) of objectives $C_1,\dots, C_{\overline{K}}$ to minimize $\sum_{i\in [\overline{K}]} \sum_{\pmb{c}\in C_i} ||\pmb{c}-\pmb{\mu}_i||_{2}$, where $\pmb{\mu}_i=\frac{1}{|C_i|}\sum_{\pmb{c}\in C_i} \pmb{c}_i$ is the mean point in $C_i$.  While this problem is known to be NP-hard, strong heuristics for this purpose are readily available (see, e.g., \cite{jain2010data}). To this end, we apply a $\overline{K}$-means clustering algorithm to find $\overline{K}$ groups of similar objectives which are then aggregated, i.e. each $C_i$ is replaced with $\pmb{\mu}_i$, $i\in[\overline{K}]$.

The objective weights $\pmb{w}$ are then aggregated as uniformly as possible. Let $a,b\in\mathbb{N}_0$ be such that $K=a\overline{K}+b$ with $b<\overline{K}$. Then the first $b$ aggregated weights consist of $a+1$ original weights, while the remaining aggregated weights consist of $a$ original weights. More formally, for each $i\in[\overline{K}]$ we set
\[ \overline{w}_i = \begin{cases} \sum_{j=(i-1)(a+1)+1}^{i(a+1)} w_i & \text{ if } i\le b \\[2ex]
\sum_{j=b(a+1)+(i-b-1)a+1}^{b(a+1)+(i-b)a} w_i & \text{ otherwise } \end{cases} \]
For example, if $K=6$ and $\overline{K}=4$, then $\overline{\pmb{w}} = (w_1+w_2, w_3+w_4, w_5,w_6)$. Note that we can aggregate to any desired $\overline{K}$, without using dummy objectives. This is an advantage over the $\ell$-\textsc{Aggregation Algorithm} from Section~\ref{sec2-2}, where the size $\ell$ of the clustering is given. However, the clusters $C_i$ may have different cardinalities. Hence, the results from Section~\ref{sec2-2} cannot be applied to analyze this method.
 In fact, this aggregation may have worse theoretical approximation guarantee, as the example shown in Table~\ref{tab2} demonstrates.

\begin{table}[ht]
\caption{(a) A sample problem with $n=2$, $K$ even, and $\pmb{w}=(1/K,\ldots,1/K)$.\\
(b) The problem after the heuristic aggregation using $\overline{K}$-means with $\overline{K}=2$, and $\overline{\pmb{w}}=(0.5, 0.5)$.} \label{tab2}
\centering
\begin{tabular}{cccc}
\begin{tabular}{ll | lllll}
(a)	& & $\pmb{c}_1$ & $\pmb{c}_2$ & $\pmb{c}_3$ & $\ldots$  & $\pmb{c}_K$ \\
 \cline{2-7}
&$x_1$ & 1 & 0 & 0 & $\ldots$ & 0\\
&$x_2$ & 0 & 1 & 1 & $\ldots$ & 1
\end{tabular}
&
\begin{tabular}{ll | ll}
 (b)	& & $\overline{\pmb{c}}_1$ & $\overline{\pmb{c}}_2$ \\
\cline{2-4}	
 &$x_1$ & 1 & 0 \\
&$x_2$ & 0 & 1 \\
\end{tabular}
\end{tabular}
\end{table}

In the sample problem, we have two variables with one constraint $x_1+x_2=1$ and even $K$. There is one objective, where $x_1$ has a cost of~1, and $x_2$ has a cost of~0. In the remaining $K-1$ objectives $x_1$ has cost~0, and $x_2$ has cost~1. The weight vector $\pmb{w}$ is assumed to be uniform. It is easy to verify that the optimal solution to this problem is $\pmb{x}=(1,0)$ with $\mathrm{OWA}_{\pmb{w}}(\pmb{F}(\pmb{x}))=\frac{1}{K}$. The $\ell$-\textsc{Aggregation Algorithm} from Section~\ref{sec2-2} with $\ell=K/2$, gives us again the solution $\pmb{x}=(1,0)$, which follows from the fact that the weights in $\pmb{w}$ are uniform.
If the $\overline{K}$-means approach is used with $\overline{K}=2$, the vectors $\pmb{c}_2,\ldots,\pmb{c}_K$ are found to belong to the same cluster, so we end up with the problem shown in Table~\ref{tab2}(b). Now solutions $\pmb{x}=(1,0)$ and $\pmb{x}'=(0,1)$ have the same objective value for any aggregated weight vector $\overline{\pmb{w}}$ (the $\overline{K}$-means algorithm gives us the vector $\overline{\pmb{w}}=(0.5,0.5)$). If we choose $\pmb{x}'$, then $\mathrm{OWA}_{\pmb{w}}(\pmb{F}(\pmb{x}'))=\frac{K-1}{K}$, which is $K-1$ times worse than the optimum.

The fact that the $\overline{K}$-means approach has bad theoretical worst case ratio follows from the fact that the formed clusters can have different cardinalities. Nevertheless, there may be still a practical advantage of this method and we will explore it in the experimental section. Note that, if the theoretical guarantee is still required, one can also modify the $\ell$-\textsc{Aggregation Algorithm} by resorting the objectives into clusters of size $\ell$.

\section{The Hurwicz criterion}
\label{sec3}

In this section we show that the idea of aggregation can also be applied to 
 $\textsc{OWA}~\mathcal{P}$
when OWA is the 
pessimism-optimism
Hurwicz criterion, i.e. when $w_1=\lambda$, $w_K=1-\lambda$, and $w_k=0$ if $k\neq 1$ and $k\neq K$ for a fixed $\lambda\in [0,1]$.  This criterion is used  in decision under complete uncertainty and enables to take account the decision maker's 
attitudes that are neither extremely pessimistic nor extremely optimistic.
Notice the the weights are then not monotone, so the results obtained in Section~\ref{sec2} cannot be applied directly. Thus,
the Hurwicz  criterion  (see, e.g.,~\cite{LR57}) is a convex combination of the maximal and the minimal cost 
of~$\pmb{x}\in  \X$ in the set of $K$ objectives (scenarios) $\{\pmb{c}^T_1\pmb{x},\dots,\pmb{c}^T_K\pmb{x}\}$ and
OWA has the following form:
\begin{equation}
 \mathrm{OWA}_{\pmb{w}}(\pmb{F}(\pmb{x}))
  = \lambda\max_{k\in[K]} \pmb{c}^T_k\pmb{x} + (1-\lambda)\min_{i \in[K]} \pmb{c}^T_i\pmb{x}.
  \label{owahc}
\end{equation} 
The problem $\textsc{OWA}~\mathcal{P}$ with  the Hurwicz criterion~(\ref{owahc}) as a special case of~OWA
will be denoted by $\textsc{Hurwicz}~\mathcal{P}$.
It has been shown in~\cite{KZ15} that there exists a
$K/\lambda$ approximation
 algorithm for  $\textsc{Hurwicz}~\mathcal{P}$ when 
  $\lambda\in(0,\frac{1}{2})$, and  $\lambda K + (1-\lambda)(K-2)$ when  $\lambda \in [1/2,1]$. In this section we improve these bounds, in particular, for $\lambda \in (0,\frac{1}{2})$.
 Let us first rewrite~(\ref{owahc}) as follows
 \begin{align*}
 \mathrm{OWA}_{\pmb{w}}(\pmb{F}(\pmb{x}))) &= \lambda \max_{k\in[K]} \pmb{c}^T_k \pmb{x} + (1-\lambda) \min_{i\in[K]} \pmb{c}^T_i \pmb{x} = \min_{i\in[K]} \left[ \lambda\max_{k \in[K]} \pmb{c}^T_k \pmb{x} + (1-\lambda)\pmb{c}^T_i \pmb{x} \right]\\
&= \min_{i\in[K]} \left[\max_{k\in[K]}\  \left( \lambda\pmb{c}^T_k \pmb{x} + (1-\lambda)\pmb{c}^T_i \pmb{x}\right)\right]
=\min_{i\in [K]} \left[\max_{k\in[K]} \left(\lambda\pmb{c}_k+(1-\lambda)\pmb{c}_i \right)^T \pmb{x}\right].
\end{align*}
Accordingly,  the $\textsc{Hurwicz}~\mathcal{P}$ problem is equivalent  to solving 
$K$ $\textsc{Min-Max}~\mathcal{P}$ subproblems, i.e. 
$\min_{\pmb{x}\in  \X} \max_{k\in[K]} \left(\lambda\pmb{c}_k+(1-\lambda)\pmb{c}_i \right)^T \pmb{x}$
for every $i \in [K]$, and
choosing a solution that belongs to the best-performing subproblem -- this is a key fact.
The above equivalence leads, among others, to the following theorem.
\begin{theorem}
	If  $\textsc{Min-Max}~\mathcal{P}$ is approximable within $\alpha>1$
	(for $\alpha=1$ it is
	 polynomially solvable), then $\textsc{Hurwicz}~\mathcal{P}$ is approximable within  $\alpha$.
	\label{thmhur}
\end{theorem}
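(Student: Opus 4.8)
The plan is to build directly on the equivalence established immediately above the theorem, namely that
$$\mathrm{OWA}_{\pmb{w}}(\pmb{F}(\pmb{x}))=\min_{i\in[K]}\left[\max_{k\in[K]}\left(\lambda\pmb{c}_k+(1-\lambda)\pmb{c}_i\right)^T\pmb{x}\right].$$
For each fixed $i\in[K]$ I would define the cost vectors $\pmb{c}^{(i)}_k=\lambda\pmb{c}_k+(1-\lambda)\pmb{c}_i$ for $k\in[K]$. Because $\lambda\in[0,1]$ and all original costs are nonnegative, each $\pmb{c}^{(i)}_k$ is nonnegative, so the $i$th subproblem $\min_{\pmb{x}\in\X}\max_{k\in[K]}(\pmb{c}^{(i)}_k)^T\pmb{x}$ is a legitimate instance of $\textsc{Min-Max}~\mathcal{P}$. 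Write $\mathrm{opt}_i$ for its optimal value. Since the outer $\min_i$ and the $\min_{\pmb{x}}$ commute, the optimal Hurwicz value satisfies $\mathrm{OPT}=\min_{\pmb{x}\in\X}\mathrm{OWA}_{\pmb{w}}(\pmb{F}(\pmb{x}))=\min_{i\in[K]}\mathrm{opt}_i$.

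The algorithm I would propose is to run the assumed $\alpha$-approximation algorithm for $\textsc{Min-Max}~\mathcal{P}$ on each of the $K$ subproblems, obtaining solutions $\pmb{x}_1,\dots,\pmb{x}_K\in\X$, and then to return the candidate of smallest Hurwicz value. For a fixed solution the Hurwicz value $\min_{i}\max_{k}(\pmb{c}^{(i)}_k)^T\pmb{x}$ is computable in polynomial time (a single min over $K$ values of a max over $K$ values), so the selection step is cheap; the overall procedure makes $K$ calls to the given approximation algorithm and is therefore polynomial. For the boundary case $\alpha=1$ the same scheme solves each subproblem exactly.

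For correctness, let $i_0$ be an index attaining $\mathrm{OPT}=\mathrm{opt}_{i_0}$. The approximation guarantee on the $i_0$th subproblem gives $\max_{k\in[K]}(\pmb{c}^{(i_0)}_k)^T\pmb{x}_{i_0}\le\alpha\cdot\mathrm{opt}_{i_0}=\alpha\cdot\mathrm{OPT}$. Since the Hurwicz value of $\pmb{x}_{i_0}$ is a minimum over all $i$ of the corresponding inner maxima, it is bounded above by the $i_0$ term, so $\mathrm{OWA}_{\pmb{w}}(\pmb{F}(\pmb{x}_{i_0}))\le\max_{k\in[K]}(\pmb{c}^{(i_0)}_k)^T\pmb{x}_{i_0}\le\alpha\cdot\mathrm{OPT}$. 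The returned solution has Hurwicz value no larger than that of $\pmb{x}_{i_0}$, hence it is $\alpha$-approximate.

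I do not expect a genuine obstacle here, since the heavy lifting is done by the decomposition already stated in the excerpt. The only point requiring care is that the optimal index $i_0$ is unknown a priori; this is resolved by solving all $K$ subproblems and keeping the best candidate, which is why the $\min_i$ in the decomposition can be handled without any loss in the ratio. A secondary detail worth stating explicitly is the nonnegativity of the aggregated costs $\pmb{c}^{(i)}_k$, which is what makes each subproblem a valid $\textsc{Min-Max}~\mathcal{P}$ instance to which the hypothesis applies.
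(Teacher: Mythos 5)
Your proposal is correct and follows exactly the route the paper intends: decompose the Hurwicz problem into the $K$ min-max subproblems $\min_{\pmb{x}\in\X}\max_{k\in[K]}(\lambda\pmb{c}_k+(1-\lambda)\pmb{c}_i)^T\pmb{x}$, run the $\alpha$-approximation on each, and keep the best candidate. The paper leaves the details implicit, and your write-up merely supplies them (including the useful observation that the aggregated costs remain nonnegative), so there is nothing to fault.
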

Therefore any $\alpha$-approximation algorithm for  $\textsc{Min-Max}~\mathcal{P}$
 can be used to find  an $\alpha$-approximate solution for $\textsc{Hurwicz}~\mathcal{P}$.
 
A comparison of our approach, which calls as a subroutine a general $K$-approximation algorithm 
 for each min-max subproblem
$\min_{\pmb{x}\in  \X} \max_{k\in[K]} \left(\lambda\pmb{c}_k+(1-\lambda)\pmb{c}_i \right)^T \pmb{x}$,
$i \in [K]$, 
with the approximation results shown in~\cite{KZ15} is depicted  in Figure~\ref{figH}. 
Here, we assume a $K$-approximation algorithm (see, e.g.,~\cite{ABV09}), but  even stronger 
approximation algorithms exist for particular
min-max problems~\cite{AFK02,D13,KMU08,KZ11}.
It is evident that our new approach is better for $\lambda\in (0,\frac{1}{2})$. 
\begin{figure}[htbp]
\centering
\includegraphics[width=0.5\textwidth]{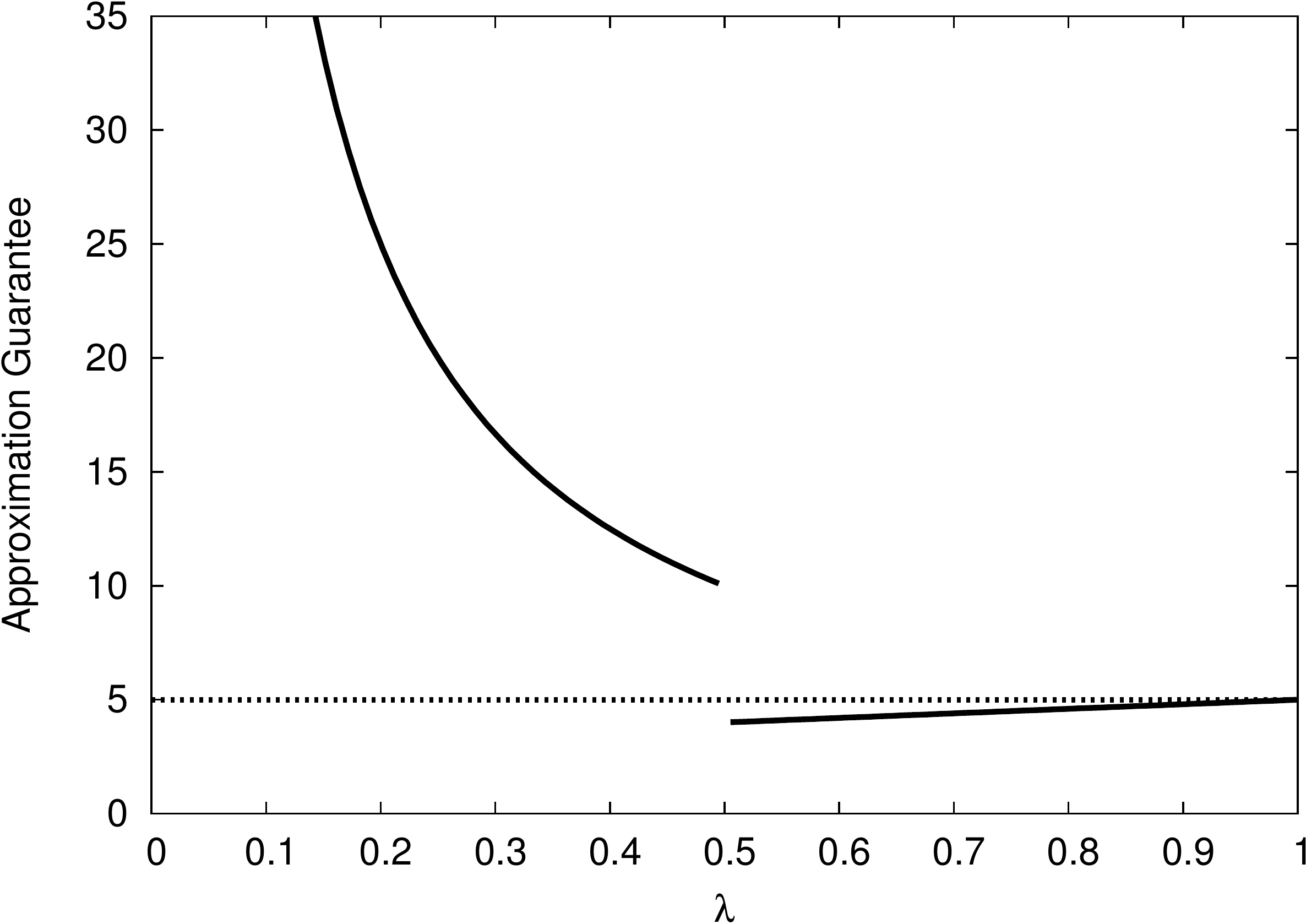}
\caption{Approximation guarantees for  $\textsc{Hurwicz}~\mathcal{P}$  with $K=5$. The solid line represents the 
approximation results from \cite{KZ15}. The dashed line is a constant guarantee $K$.}\label{figH}
\end{figure}

It is worth pointing out  that each min-max subproblem
is $\textsc{OWA}~\mathcal{P}$ with nonincreasing weights
($w_1=1$ and $w_k=0$ for $k>1$),
so the results obtained in Section~\ref{sec2} ($\ell$-\textsc{Aggregation Algorithm}) or in~\cite{CG18}
can be also applied to  $\textsc{Min-Max}~\mathcal{P}$ and, in consequence, to  $\textsc{Hurwicz}~\mathcal{P}$.

\section{Computational tests}
\label{sec4}

In this section, we present computational experiments illustrating 
 the practical performance of solutions to $\textsc{OWA}~\mathcal{P}$
  that are found through our objective aggregation approach.

\subsection{Setup}
To test the practical performance of our aggregation algorithm, we have chosen a \textsc{Min-Knapsack} problem
 of the form (see, e.g.~\cite{WS10}):
\[ 
\textsc{Min-Knapsack: }  \min_{\pmb{x}\in\X} \pmb{c}^T\pmb{x} \qquad \text{ with } \qquad \X = \{\pmb{x}\in\{0,1\}^n : \pmb{b}^T\pmb{x} \ge B\},
\]
where $c_i, b_i\geq  0$, $i\in [n]$, and $B>0$ are given.
To solve the resulting \textsc{OWA Min-Knapsack} problem, we reformulate it using the technique from \cite{CG15} to find
\begin{align*}
\textsc{OWA Min-Knapsack: }\min\ & \sum_{k\in[K]} \pi_k + \rho_k \\
\text{s.t. } & \sum_{i\in[n]} b_ix_i \ge B \\
& \pi_k + \rho_j \ge \sum_{i\in[n]} w_k c_{ij} x_i & \forall j,k\in[K] \\
& x_i \in \{0,1\} & \forall i\in[n]
\end{align*}
We generate several problem sets with different parameters. We set $n=40$ for all experiments, and consider instances with $K=50$ and $K=200$. We generate item weights $b_i$, $i\in [n]$, by sampling i.i.d. uniformly from the interval $[0.1,10]$. We set $B= 1/3\sum_{i\in[n]} b_i$. Additionally, we test two methods to generate objectives $\pmb{c}$, and two methods to generate weights $\pmb{w}$.

In the first method to generate item costs $c_{ik}$, $i\in[n], k\in [K]$, we sample i.i.d. uniformly from $[0.5,1.5]$ and multiply this number with $b_i$ (i.e., item weights and costs are correlated). In the second method, we assume that objective functions have more structure. We generate $K'<K$ nominal scenarios in the same way as for the first method. We then sample $K$ scenarios, by first choosing a random nominal scenario, and then multiplying all costs of this scenario with random values sampled i.i.d. uniformly from $[0.8,1.2]$.

For the weight vectors $\pmb{w}$ with $w_1\geq w_2\geq\cdots \geq w_K$, the first method uses the following 
generating function (see~\cite{KZ16}):
\begin{align}
g_{\alpha}(z)&=\frac{1}{1-\alpha}(1-\alpha^z)  &\label{genfun}\\
w_k&=g_{\alpha}\left(\frac{k}{K}\right)-g_\alpha \left(\frac{k-1}{K}\right)&k\in [K], \nonumber
\end{align}
where $\alpha\in (0,1)$ is a fixed parameter.
It is easily seen that the greater the value of~$\alpha$, the less distorted is the weight distribution (i.e. it is closer to uniform).
The second method to generate weight vectors uses the $p$-centra setting, where for a fixed $p\in[K]$, $w_i = 1/p$ for $i=1,\ldots,p $, and $w_i=0$ for all other $i$.

All parameter settings and the corresponding instance names are summarized in Table~\ref{exp-table}. Each experiment is repeated 200 times, and results are averaged.

\begin{table}[htb]
\begin{center}
\begin{tabular}{r|rrrr}
Name & $n$ & $K$ & $\pmb{c}$ & $\pmb{w}$ \\
\hline
$\mathcal{I}^1_1$ & 40 & 50 & uni & $\alpha = 10^{-1}$ \\
$\mathcal{I}^1_2$ & 40 & 50 & uni & $\alpha = 10^{-3}$ \\
$\mathcal{I}^1_3$ & 40 & 50 & uni & $p = 0.1K$ \\
$\mathcal{I}^1_4$ & 40 & 50 & uni & $p = 0.3K$ \\
\hline
$\mathcal{I}^2_1$ & 40 & 50 & $K'=10$ & $\alpha = 10^{-1}$ \\
$\mathcal{I}^2_2$ & 40 & 50 & $K'=10$ & $\alpha = 10^{-3}$ \\
$\mathcal{I}^2_3$ & 40 & 50 & $K'=10$ & $p = 0.1K$ \\
$\mathcal{I}^2_4$ & 40 & 50 & $K'=10$ & $p = 0.3K$ \\
\hline
$\mathcal{J}^1_1$ & 40 & 200 & uni & $\alpha = 10^{-1}$ \\
$\mathcal{J}^1_2$ & 40 & 200 & uni & $\alpha = 10^{-3}$ \\
$\mathcal{J}^1_3$ & 40 & 200 & uni & $p = 0.1K$ \\
$\mathcal{J}^1_4$ & 40 & 200 & uni & $p = 0.3K$ \\
\hline
$\mathcal{J}^2_1$ & 40 & 200 & $K'=10$ & $\alpha = 10^{-1}$ \\
$\mathcal{J}^2_2$ & 40 & 200 & $K'=10$ & $\alpha = 10^{-3}$ \\
$\mathcal{J}^2_3$ & 40 & 200 & $K'=10$ & $p = 0.1K$ \\
$\mathcal{J}^2_4$ & 40 & 200 & $K'=10$ & $p = 0.3K$ \\
\end{tabular}
\caption{Problem instances with parameter settings.}\label{exp-table}
\end{center}
\end{table}

All experiments were carried out on a 16-core Intel Xeon E5-2670 processor, running at 2.60 GHz with 20MB cache, and Ubuntu 12.04. Processes were pinned to one core. We used CPLEX v.12.6 to solve all problem formulations with a timelimit of 60 seconds.

We compare two algorithms in this setting. The first algorithm is the $\ell$-\textsc{Aggregation Algorithm} from Section~\ref{sec2-2}. As the second algorithm, we also consider the aggregation based on using $\overline{K}$-means from Section~\ref{sec2-3}\footnote{To solve the clustering problem, we used C++ library by John Burkardt from \url{http://people.sc.fsu.edu/~jburkardt/cpp_src/kmeans/kmeans.html}}. In the following, for brevity, the first approach is referred to as Alg1, and the latter as Alg2.

\subsection{Results}

We first present averaged results over all instance types with $K=50$ and all instance types with $K=200$, respectively. Figure~\ref{figexp-p} shows average objective values, while Figure~\ref{figexp-time} presents computation times.

\begin{figure}[htbp]
\centering
\subfigure[$K=50$.\label{figexp-p50}]{\includegraphics[width=0.45\textwidth]{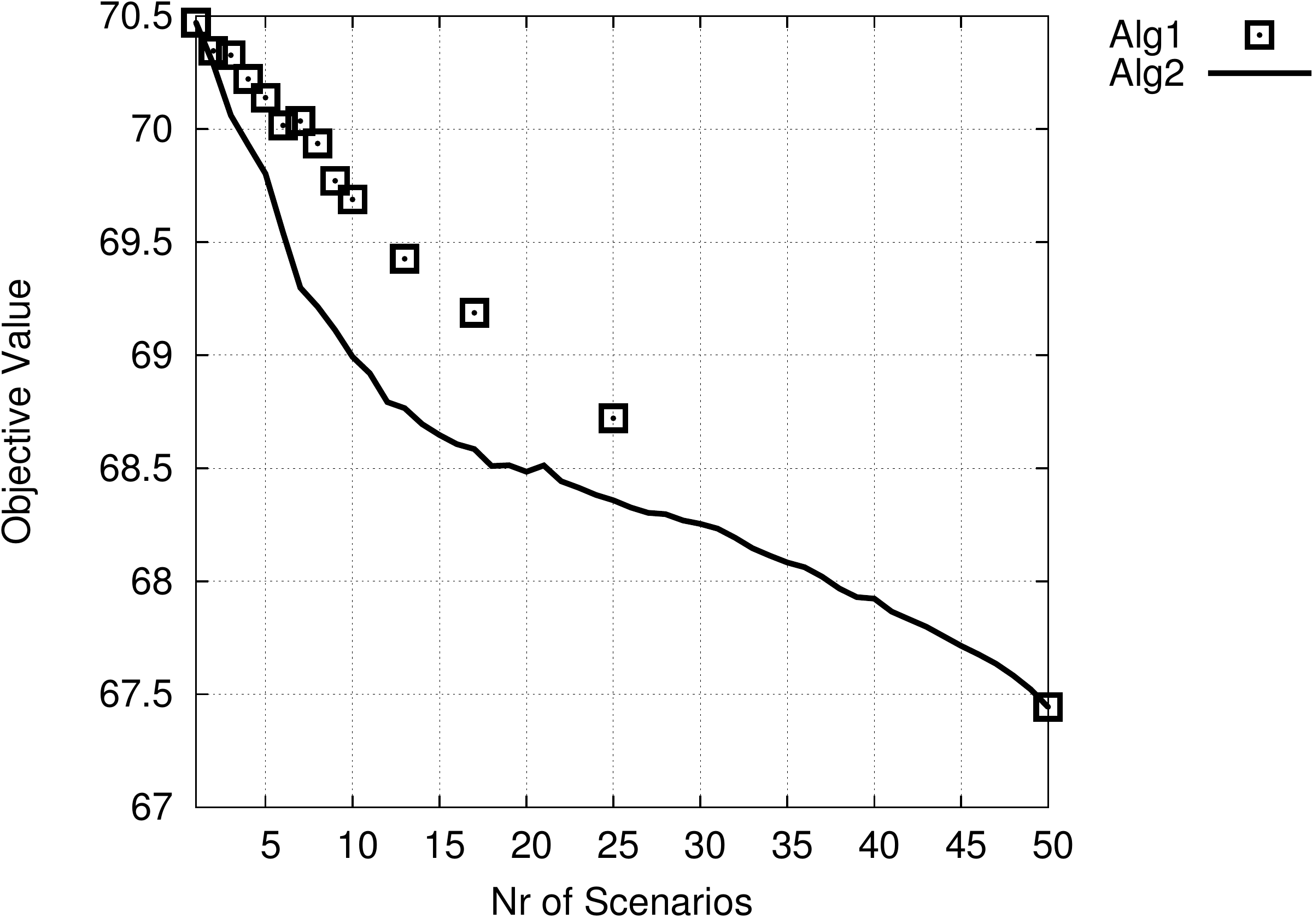}}
\subfigure[$K=200$.\label{figexp-p200}]{\includegraphics[width=0.45\textwidth]{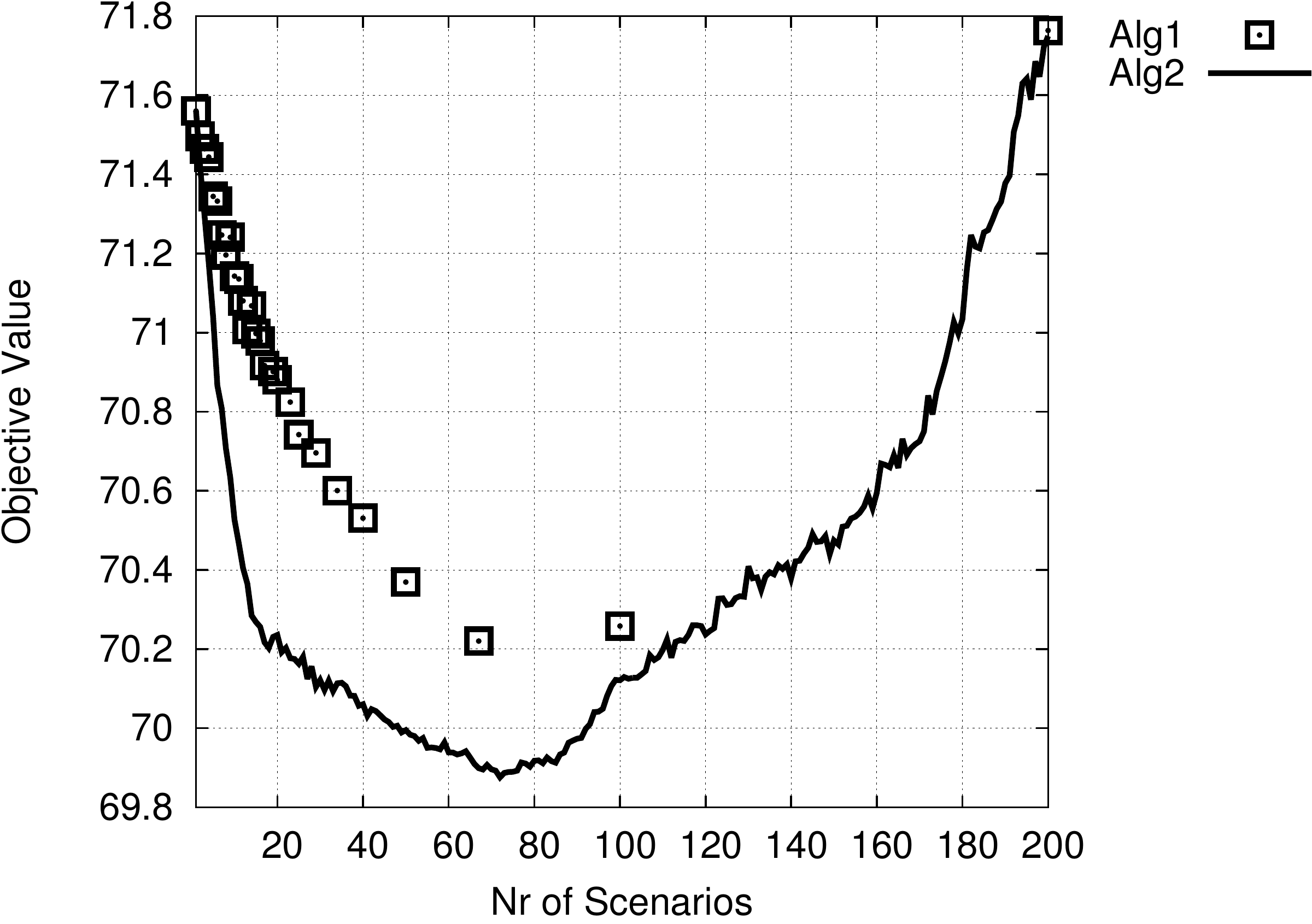}}
\caption{Average objective values of best solutions found for \textsc{OWA Min-Knapsack}.} \label{figexp-p}
\end{figure}

\begin{figure}[htbp]
\centering
\subfigure[$K=50$.\label{figexp-time50}]{\includegraphics[width=0.45\textwidth]{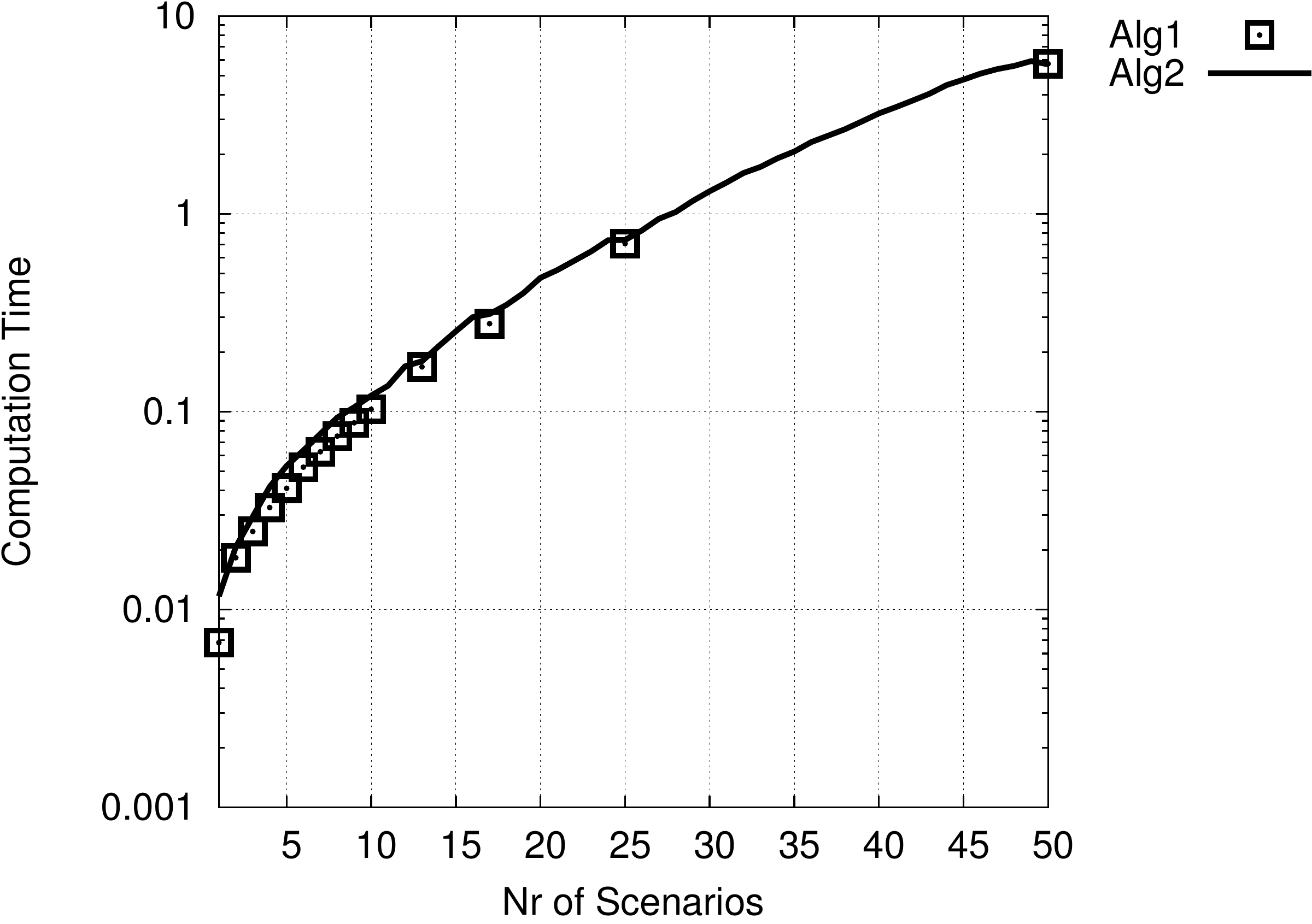}}
\subfigure[$K=200$.\label{figexp-time200}]{\includegraphics[width=0.45\textwidth]{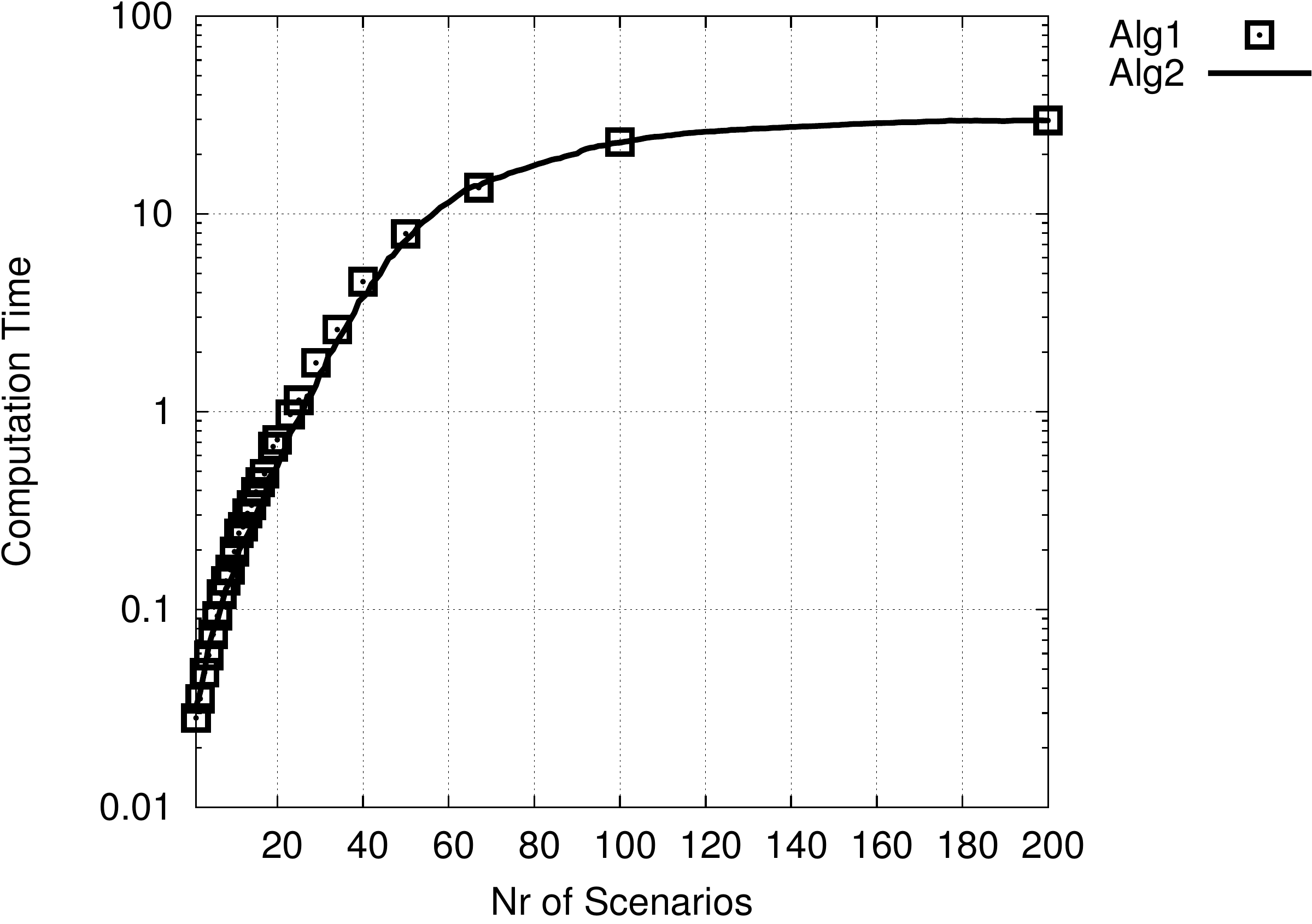}}
\caption{Average computation times for \textsc{OWA Min-Knapsack}.} \label{figexp-time}
\end{figure}

On all the figures, the horizontal axis shows how many scenarios (objectives functions) were left after aggregation. This means that for the leftmost point, we use an average scenario, while for the rightmost point, we solve the original \textsc{OWA Min-Knapsack} problem. Note that Alg1 is presented by discrete points, which is due to the fact that different choices of $\ell$ can result in the same reduced problem size. For example, for $\ell=2$, $K=50$ objectives are aggregated down to 25 objectives. Therefore, all values between 25 and 50 would correspond to the same solution. Also note that using $\ell$ can lead to the use of dummy scenarios.

For the case $K=50$, all instances were solved to optimality. Figure~\ref{figexp-p50} shows that the ratio between the objective values of heuristic and optimal solutions is much smaller than the theoretical bound indicated. In general, aggregations using more objectives can give better solutions than using less objectives. This is particularly the case for Alg2, which outperforms the naive aggregation Alg1. Figure~\ref{figexp-time50} shows that the computation time increases with the number of objectives that are used (note the logarithmic vertical scale). Also, the problems resulting from Alg1 and from Alg2 have the same difficulty.

The case for $K=200$ can be seen in Figure~\ref{figexp-p200}. Here, solving the original problem was not possible within the available computation time. We find that the average objective values of our aggregation methods are better than the average objective value of the exact approach. The figure shows a trade-off between using too few objectives and too many, with best results achieved using around 70 objectives.

In Figures~\ref{plots-p-0} and \ref{plots-p-10}, we show the average objective values for different instance types in more detail. For example, Figures~\ref{figI11} and~\ref{figJ11} show the average objective values for instances where item costs are generated uniformly and independently, and item weights are close to uniform. In this setting, Alg2 does not improve Alg1 for most aggregation levels. This is also the case for other sets where item weights are generated in this way. A possible explanation for this phenomenon is that in this case, using an aggregation where every aggregated objective results from the same number of original objectives (as is the case for Alg1) is beneficial. Note that Alg2 may find clusters of different size.

\begin{figure}[htbp]
\centering
\subfigure[$\mathcal{I}^1_1$.\label{figI11}]{\includegraphics[width=0.43\textwidth]{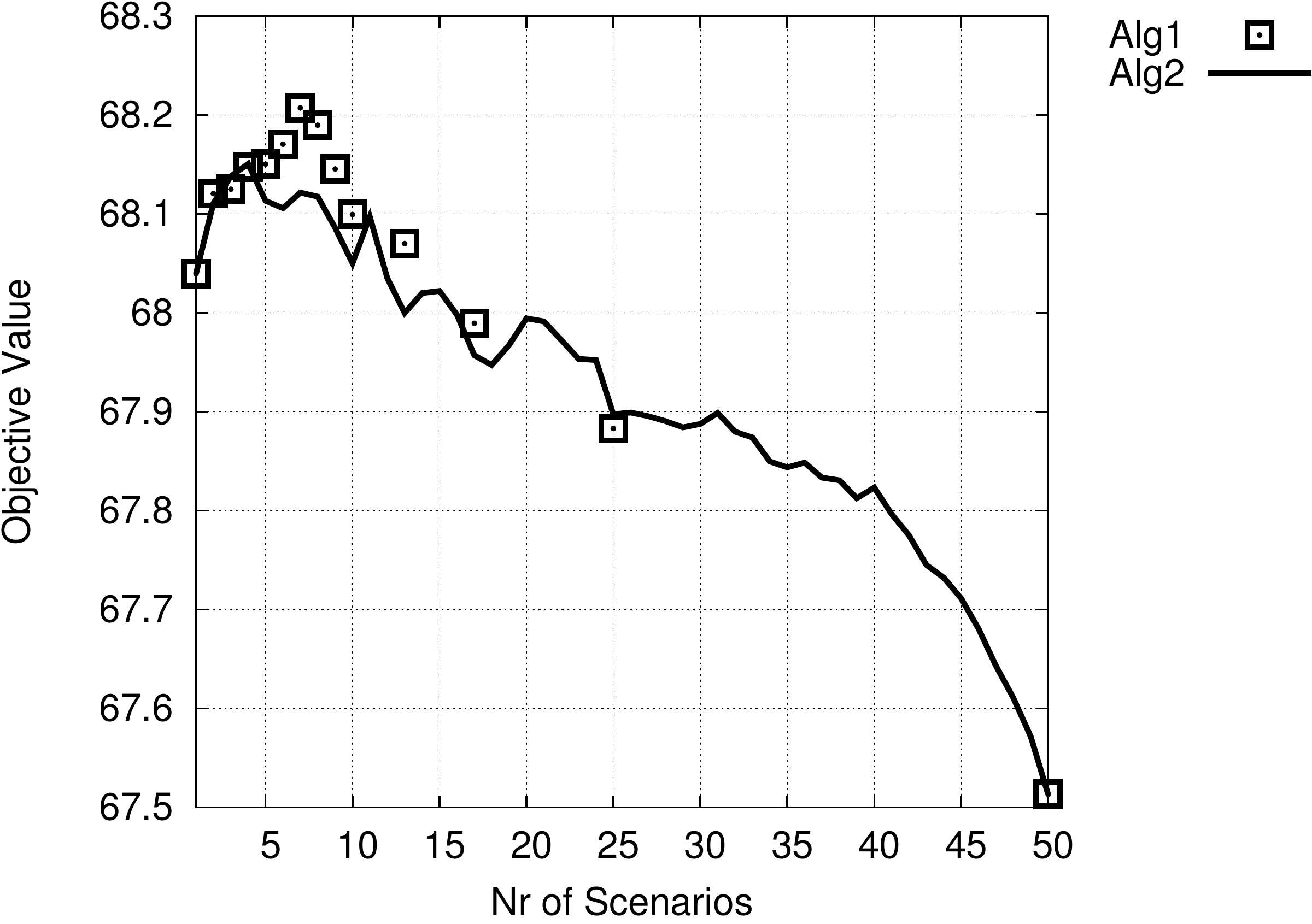}}
\subfigure[$\mathcal{J}^1_1$.\label{figJ11}]{\includegraphics[width=0.43\textwidth]{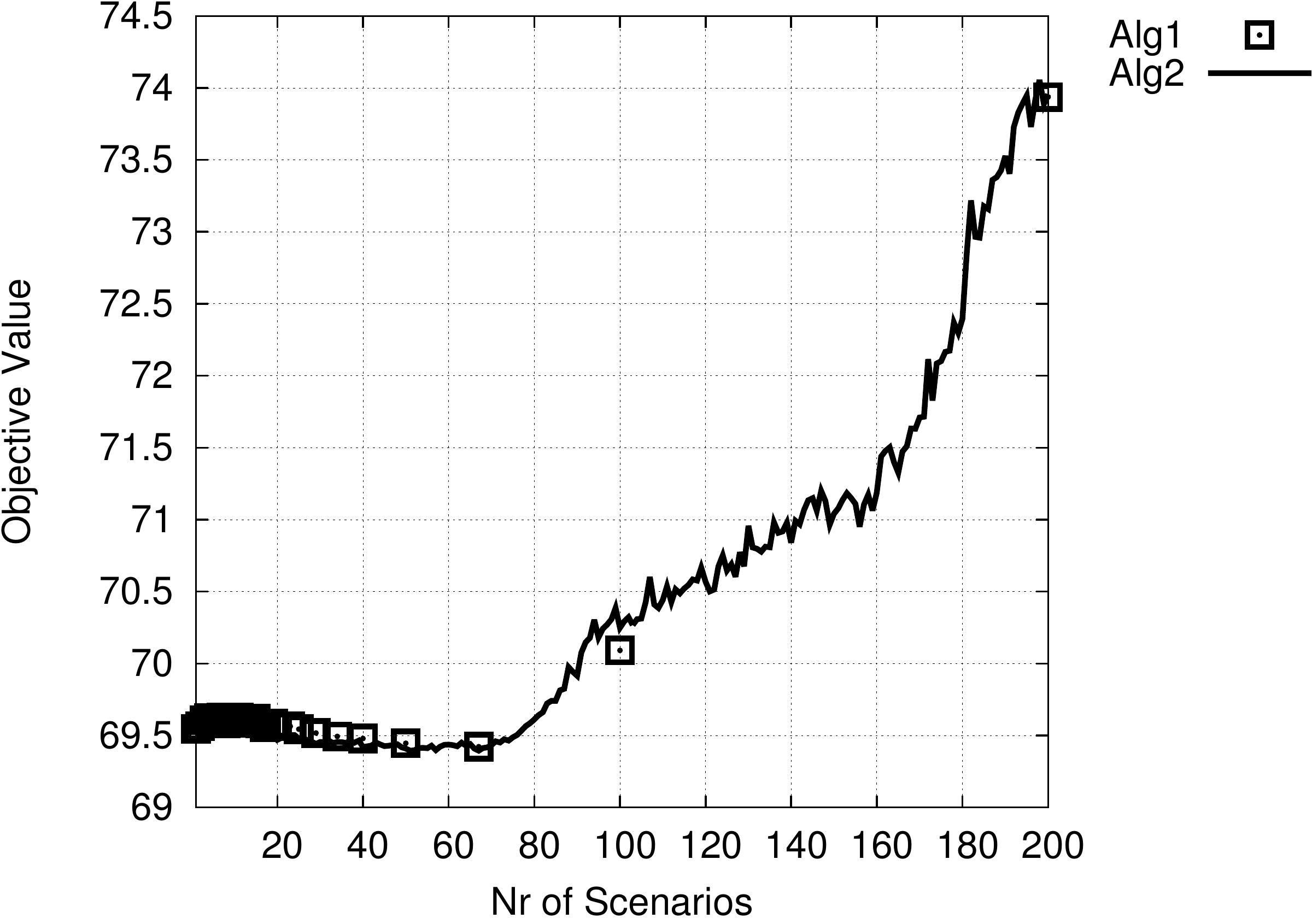}}
\subfigure[$\mathcal{I}^1_2$.]{\includegraphics[width=0.43\textwidth]{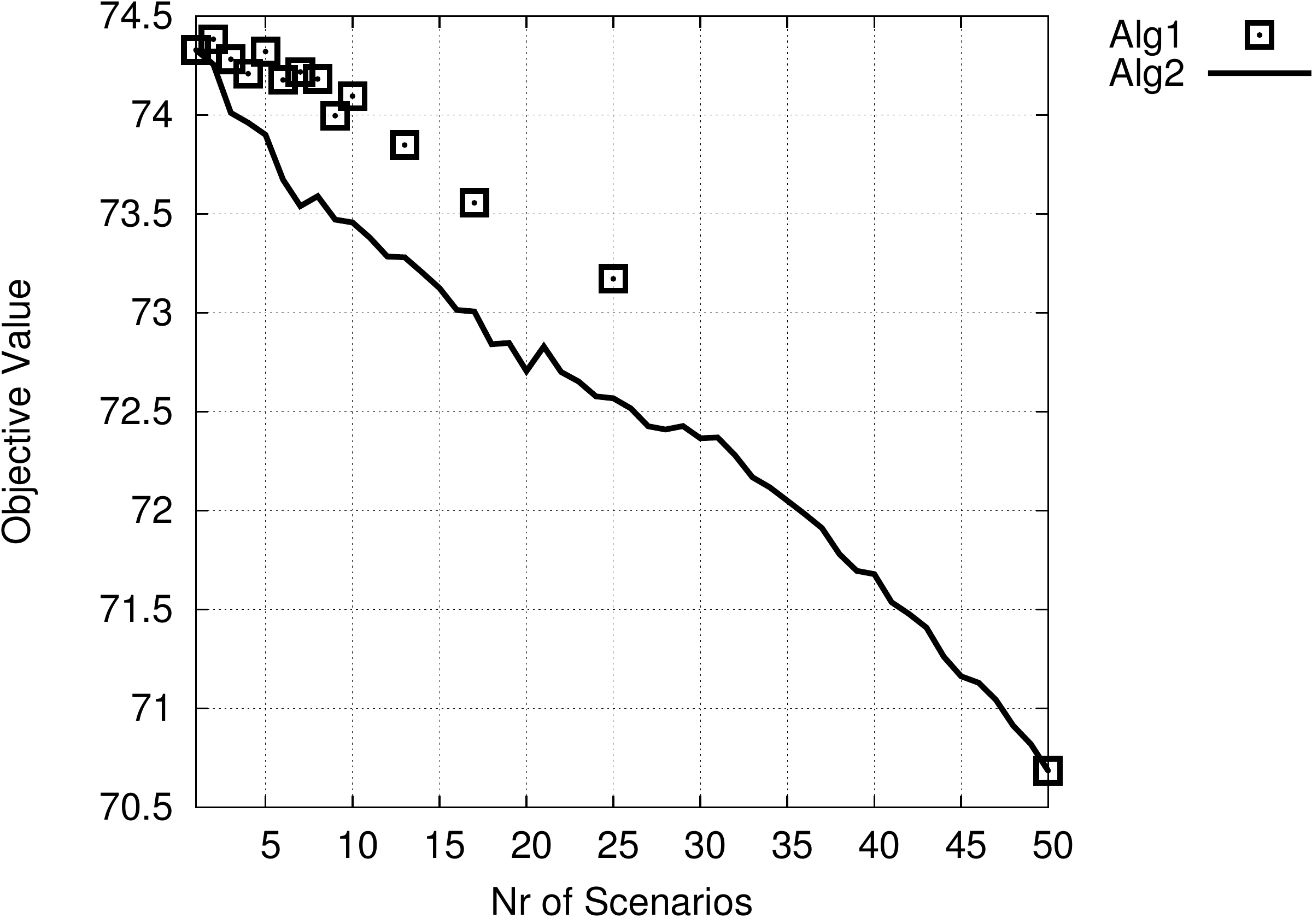}}
\subfigure[$\mathcal{J}^1_2$.]{\includegraphics[width=0.43\textwidth]{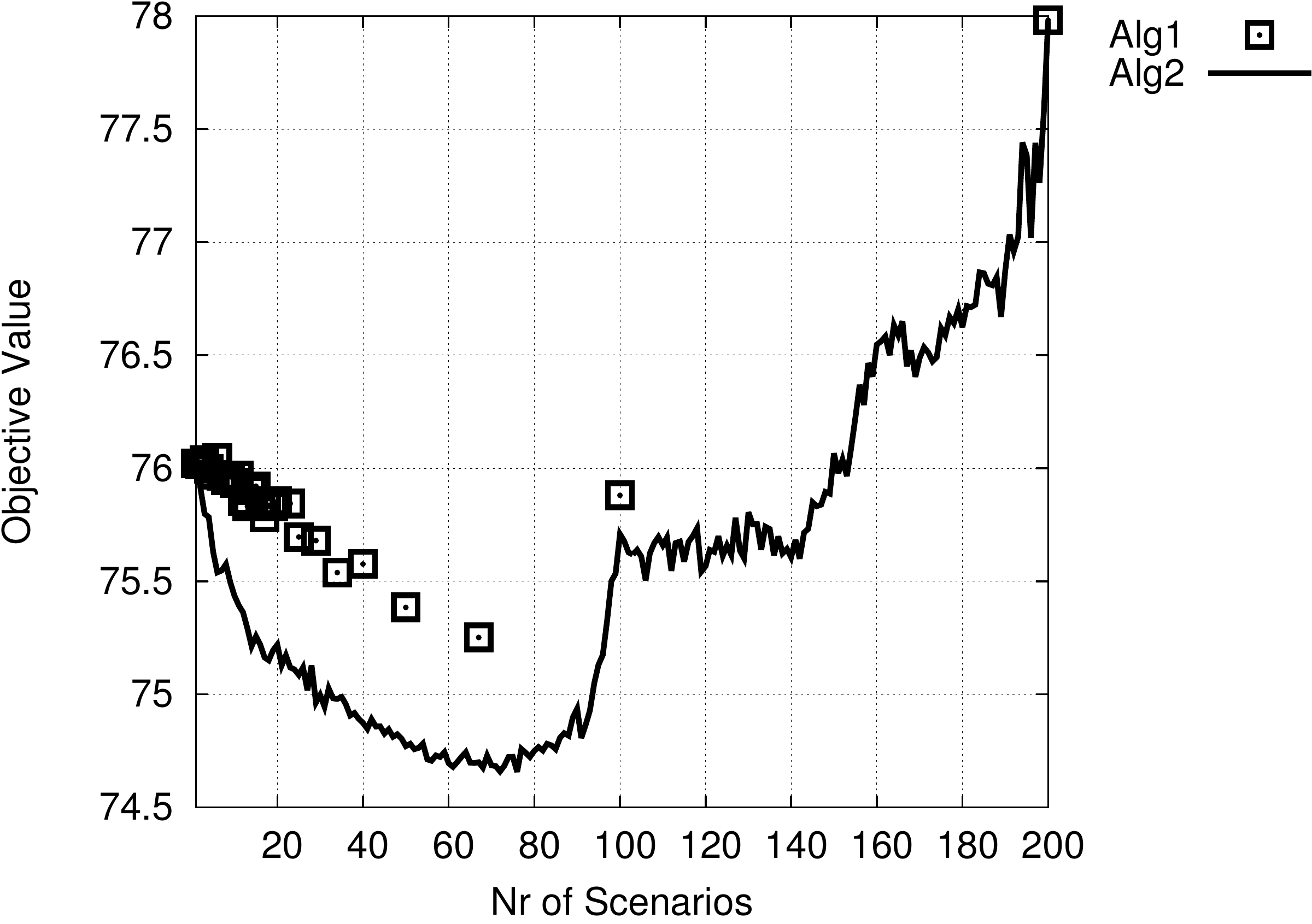}}
\subfigure[$\mathcal{I}^1_3$.]{\includegraphics[width=0.43\textwidth]{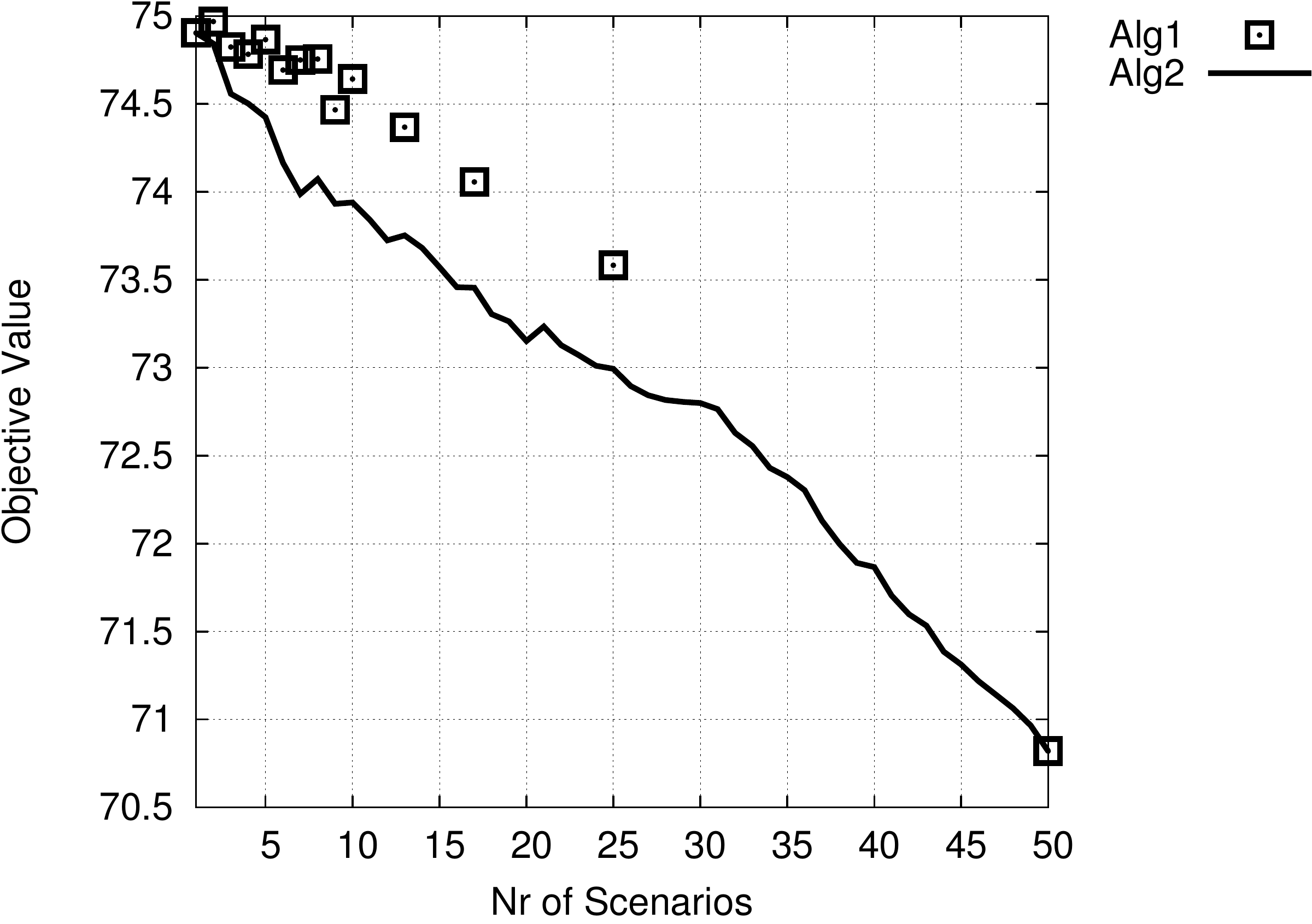}}
\subfigure[$\mathcal{J}^1_3$.]{\includegraphics[width=0.43\textwidth]{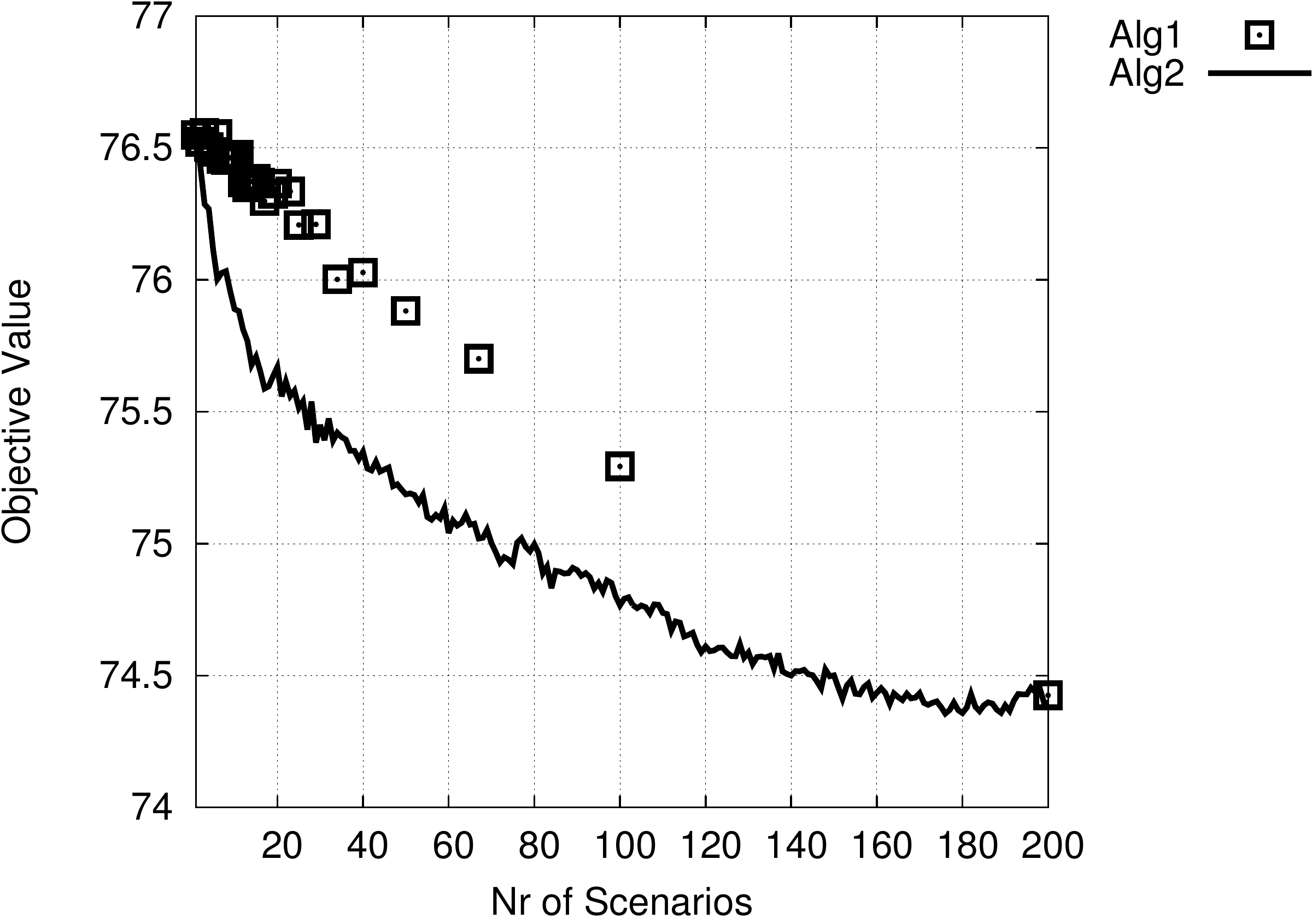}}
\subfigure[$\mathcal{I}^1_4$.]{\includegraphics[width=0.43\textwidth]{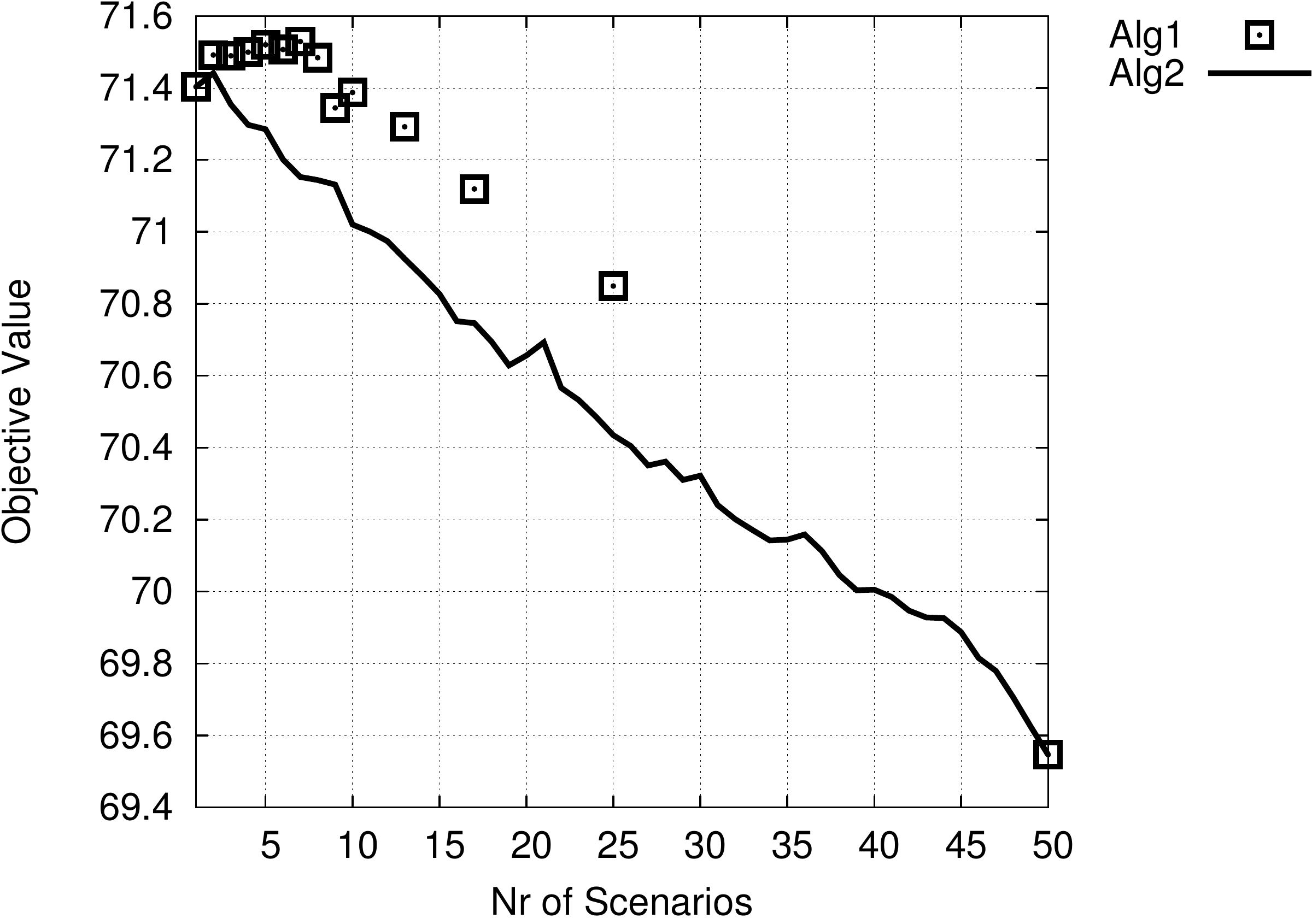}}
\subfigure[$\mathcal{J}^1_4$.]{\includegraphics[width=0.43\textwidth]{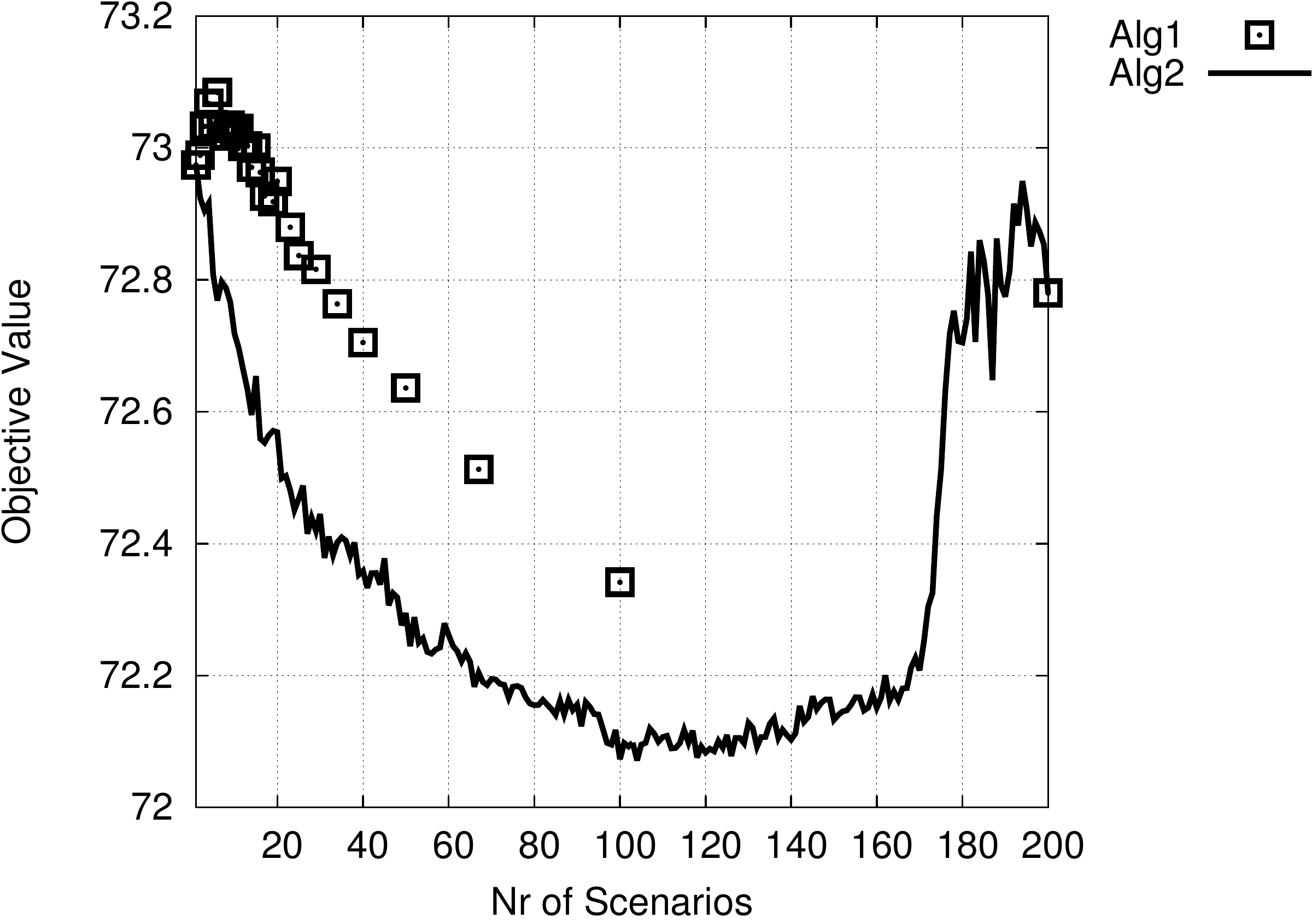}}
\caption{Average objective values for \textsc{OWA Min-Knapsack}.\label{plots-p-0}}
\end{figure}

\begin{figure}[htbp]
\centering
\subfigure[$\mathcal{I}^2_1$.]{\includegraphics[width=0.43\textwidth]{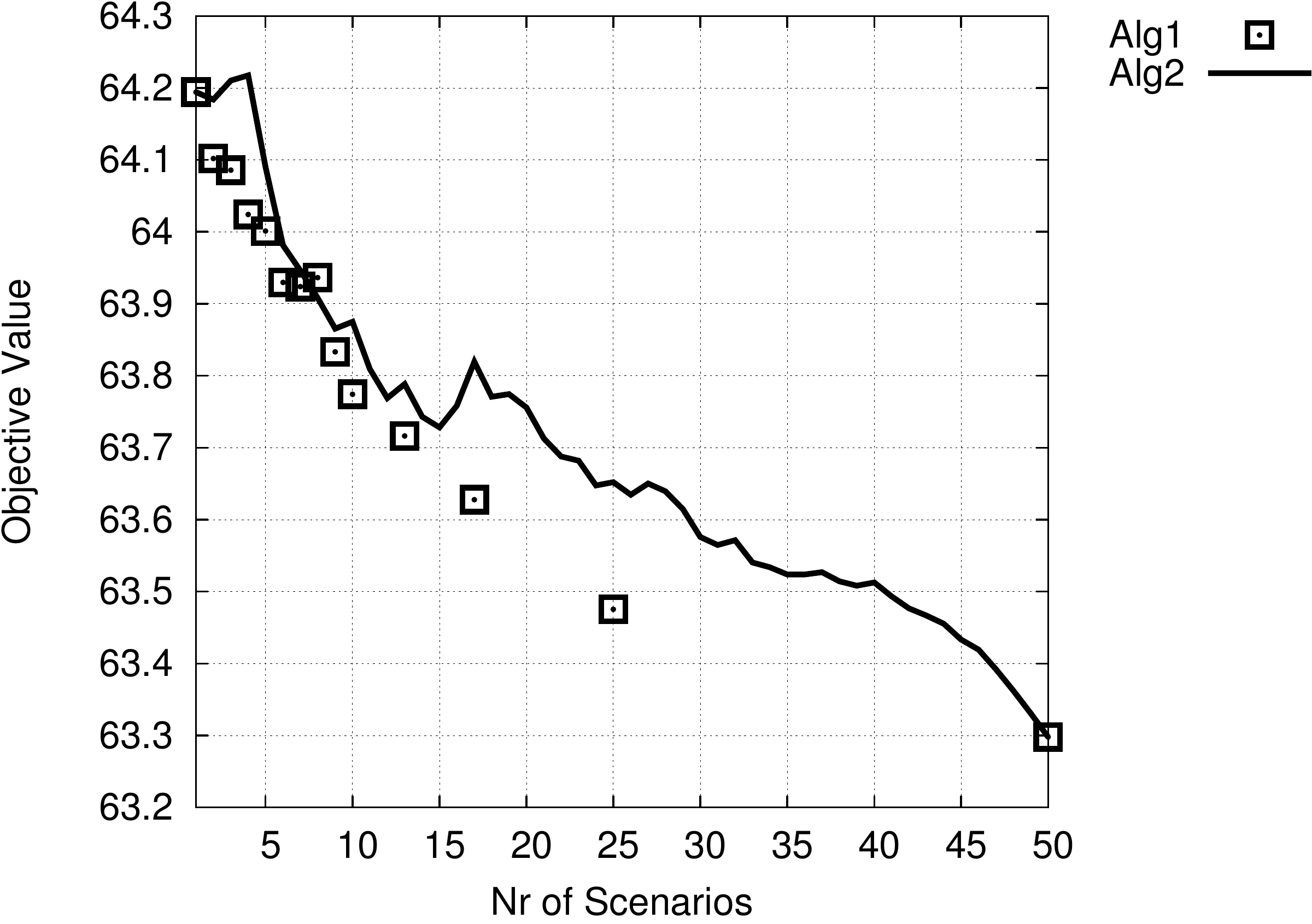}}
\subfigure[$\mathcal{J}^2_1$.]{\includegraphics[width=0.43\textwidth]{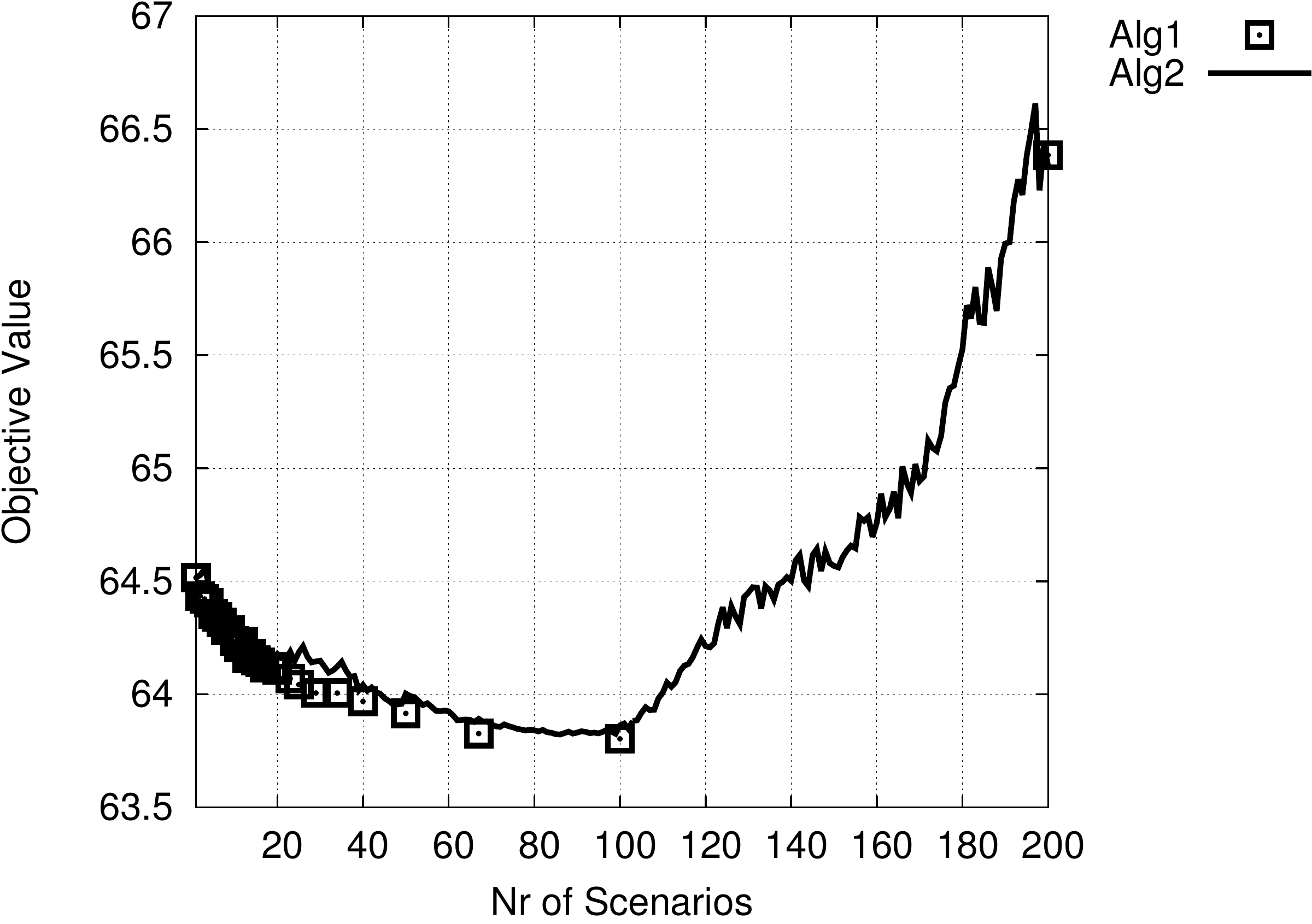}}
\subfigure[$\mathcal{I}^2_2$.\label{figI22}]{\includegraphics[width=0.43\textwidth]{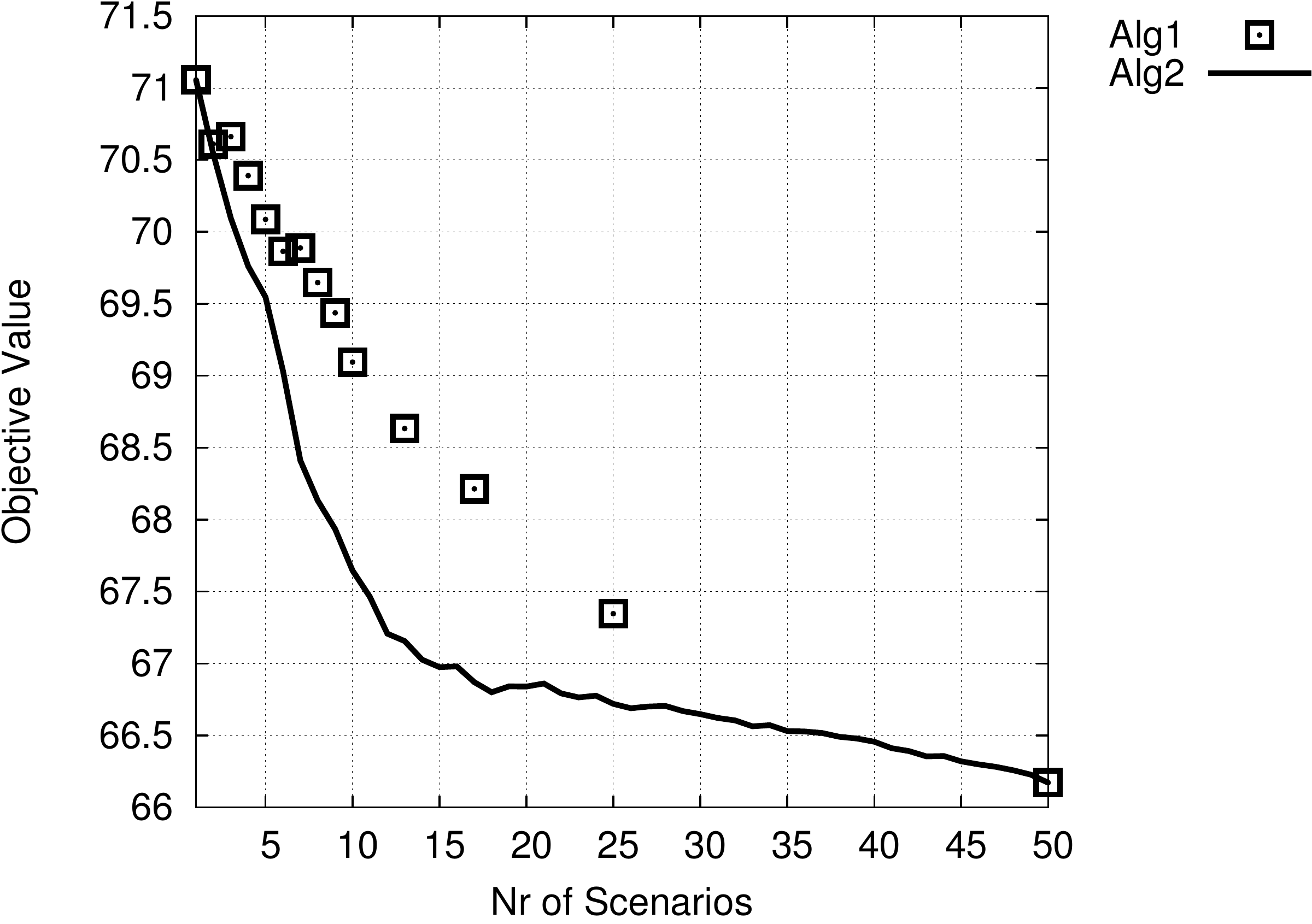}}
\subfigure[$\mathcal{J}^2_2$.]{\includegraphics[width=0.43\textwidth]{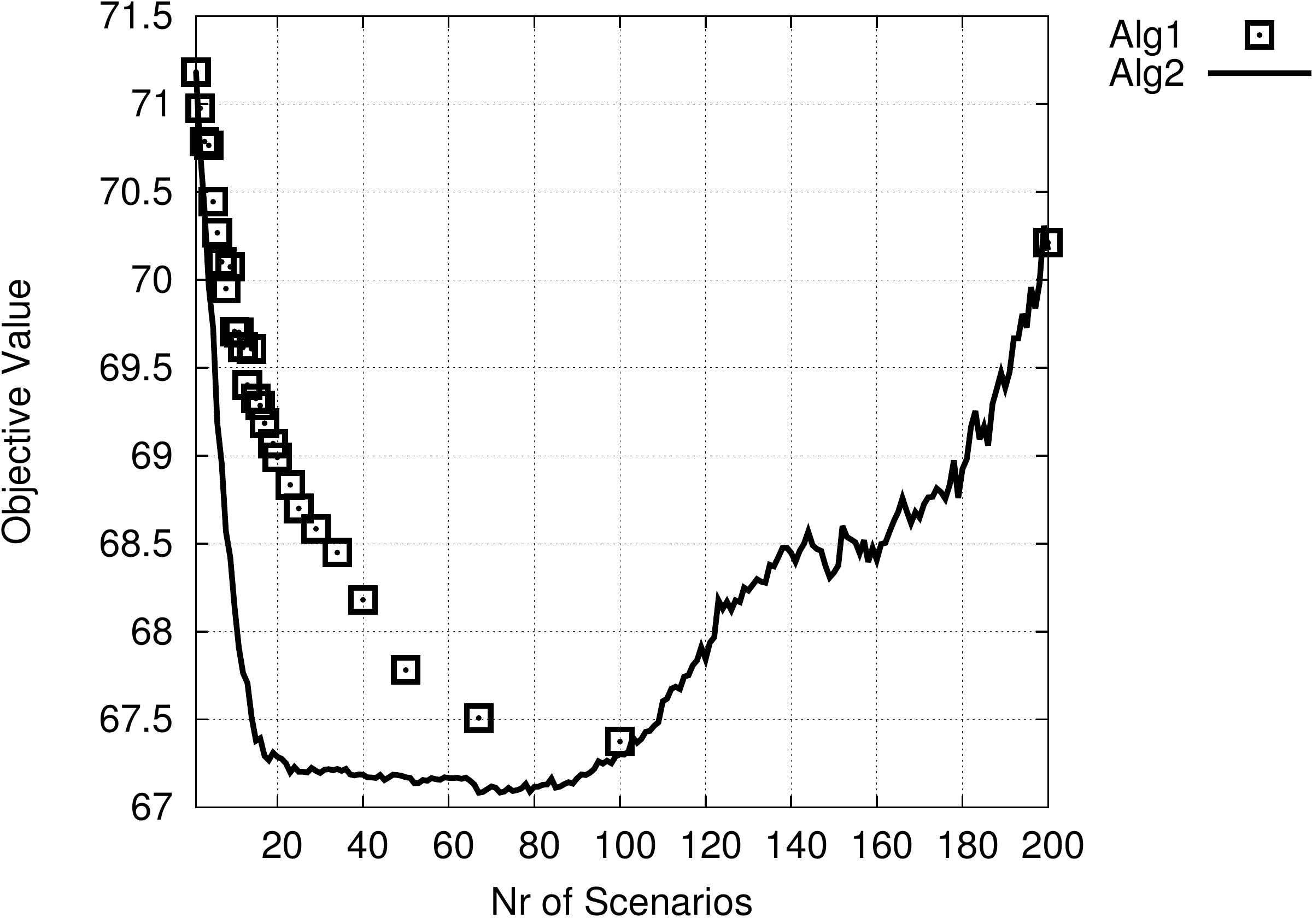}}
\subfigure[$\mathcal{I}^2_3$.]{\includegraphics[width=0.43\textwidth]{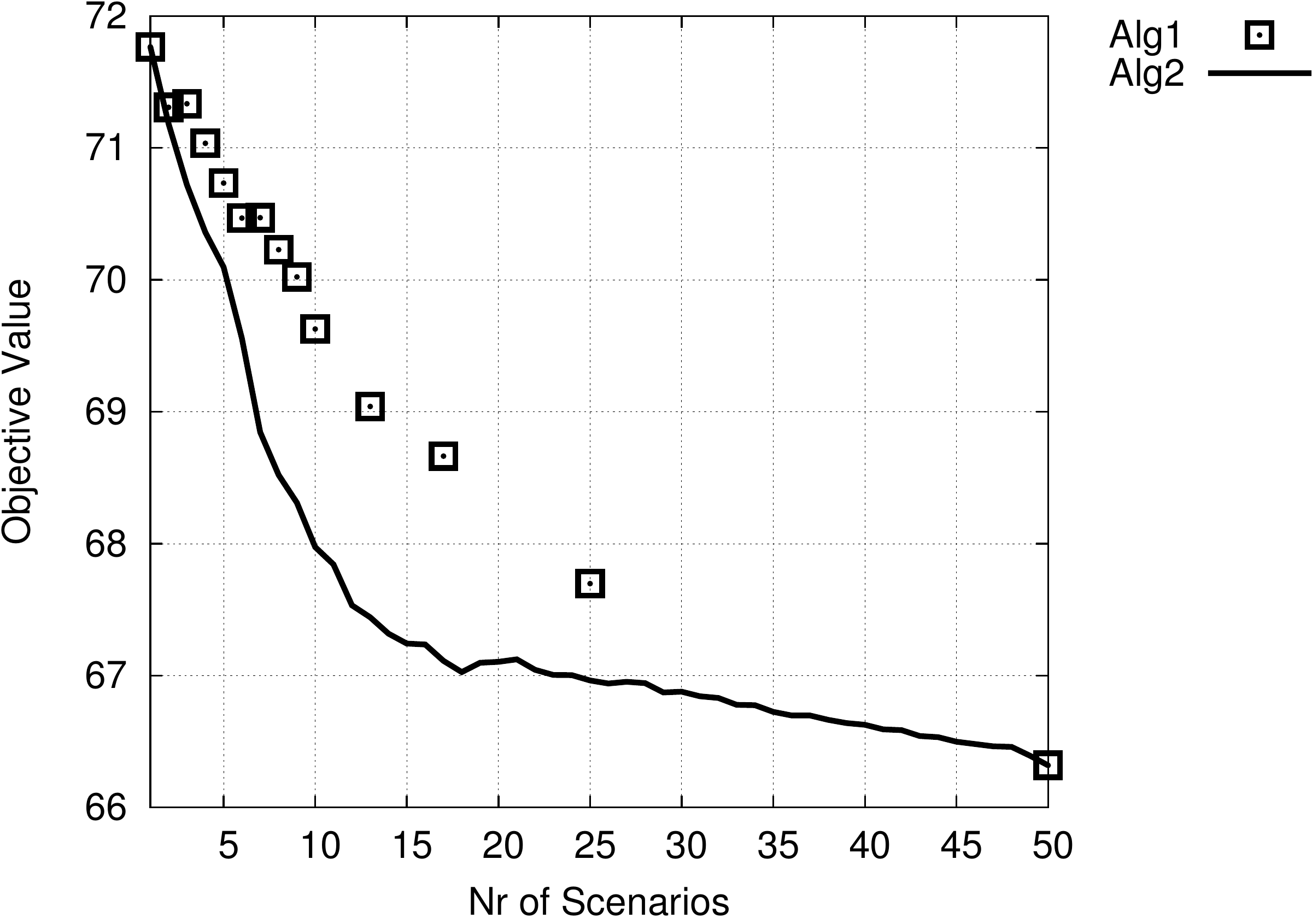}}
\subfigure[$\mathcal{J}^2_3$.]{\includegraphics[width=0.43\textwidth]{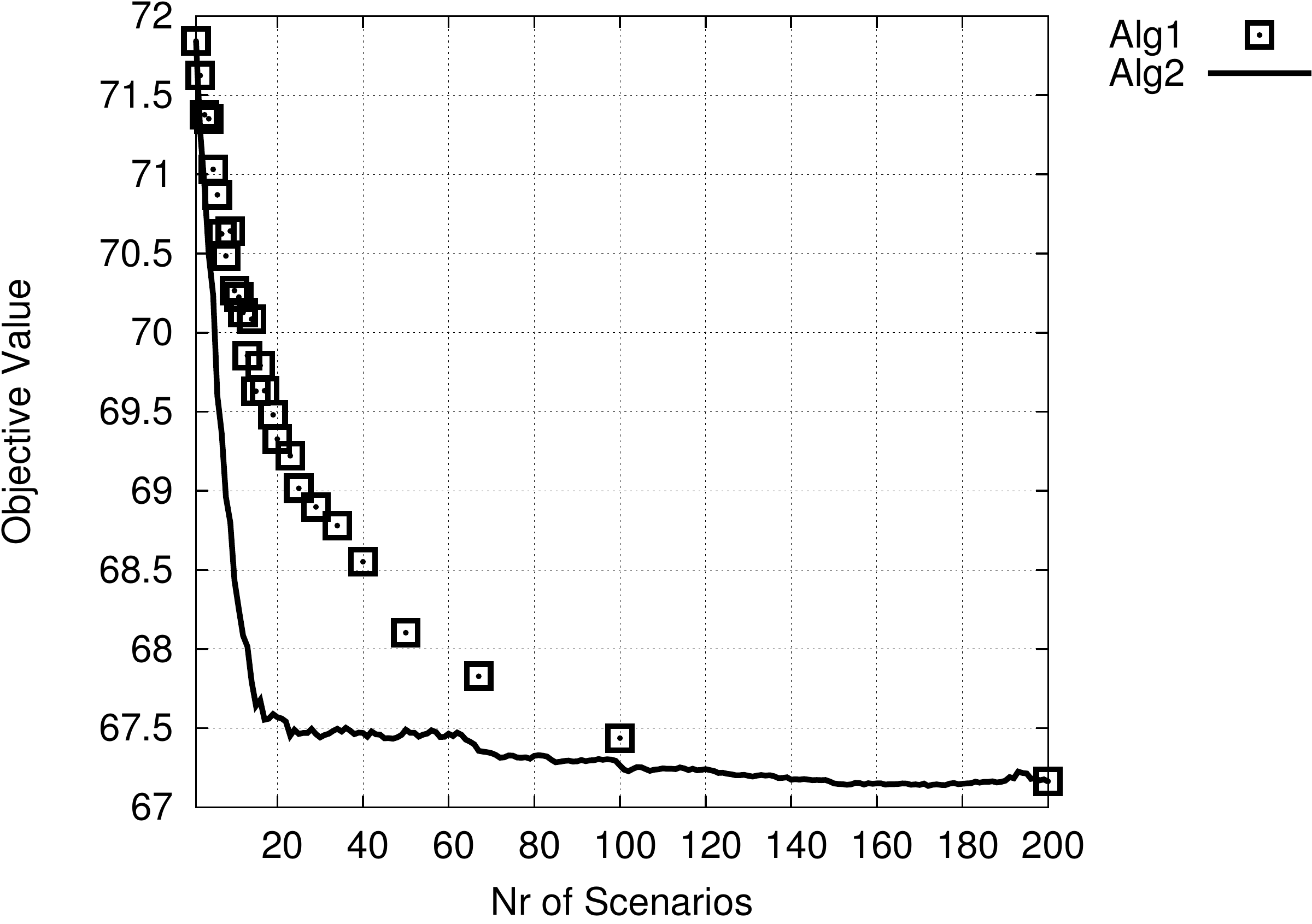}}
\subfigure[$\mathcal{I}^2_4$.]{\includegraphics[width=0.43\textwidth]{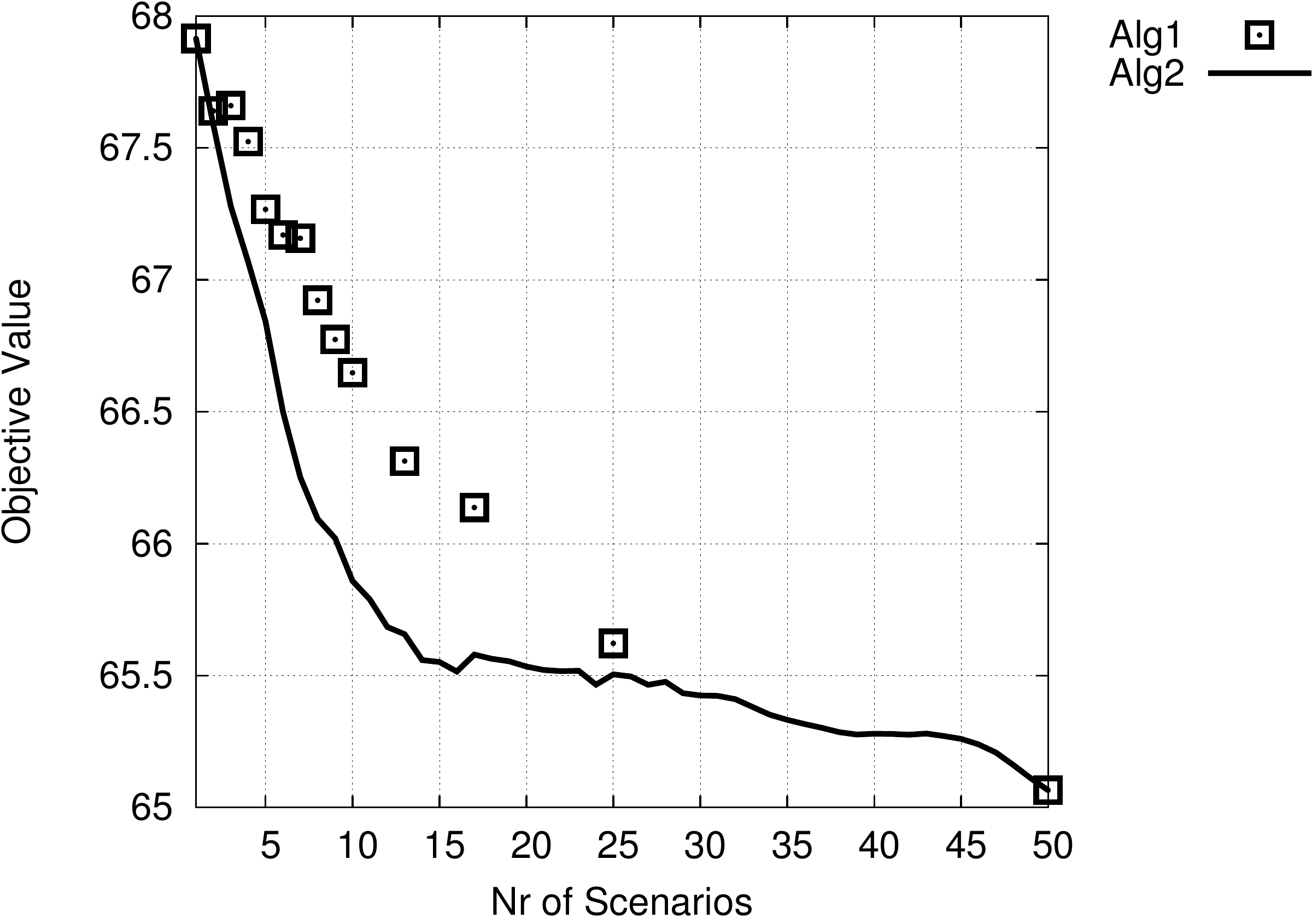}}
\subfigure[$\mathcal{J}^2_4$.]{\includegraphics[width=0.43\textwidth]{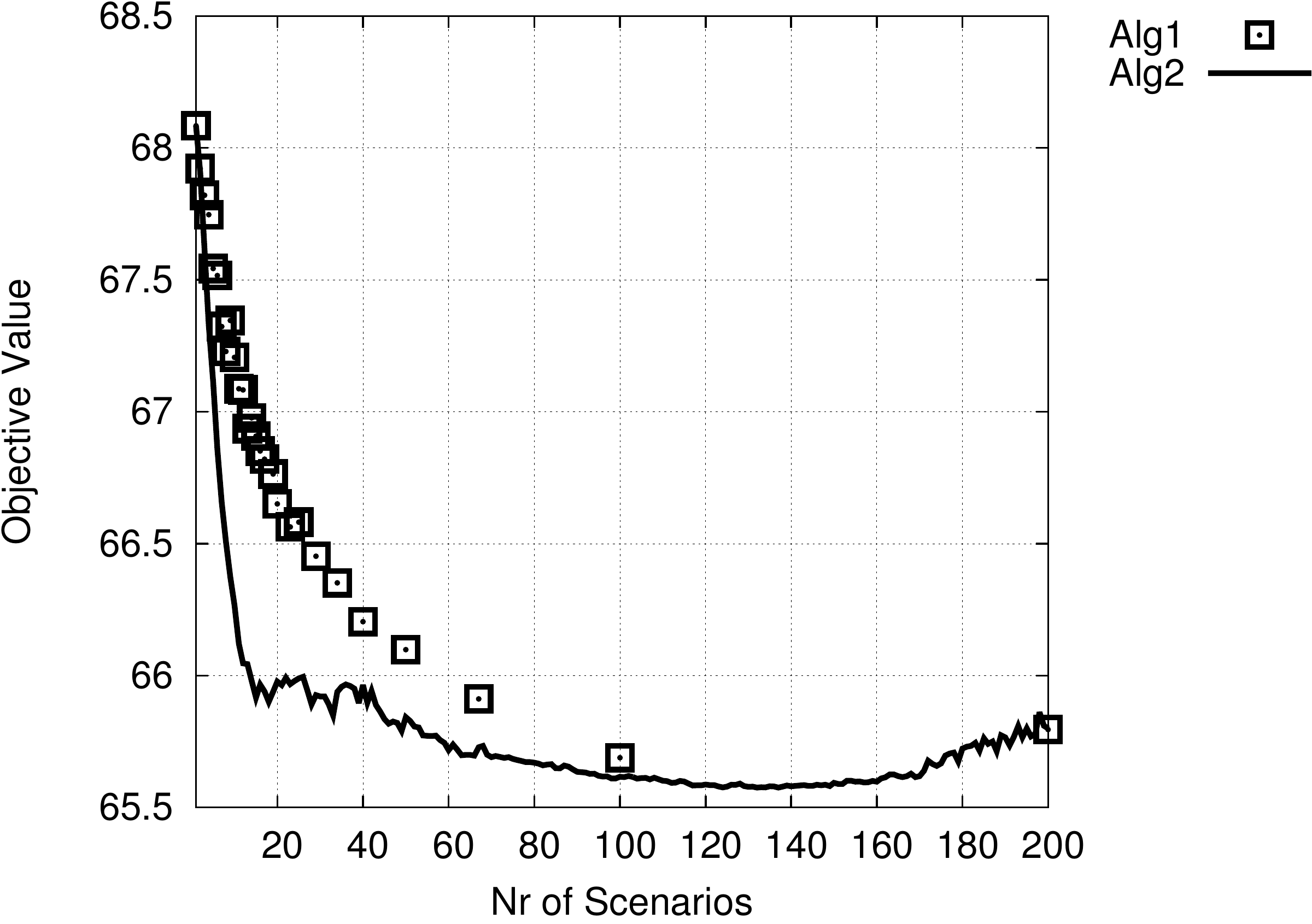}}
\caption{Average objective values for \textsc{OWA Min-Knapsack}.\label{plots-p-10}}
\end{figure}

Another particularity can be seen in Figure~\ref{figI22}, where $K'=10$ generating (nominal) objectives where used. The objective value of Alg2 decreases especially fast until around 10 objectives, where the improvement starts to slow down. This means that the algorithm successfully uses the structure in the objective functions to find good solutions with little computation time.

Comparing our heuristic approaches with the exact (non-aggregation) method, we find that for few objectives ($K=50$), there is a reasonable trade-off between reduced computation times and loss of performance. For many objectives ($K=200$), it is even possible to find better solutions within the time limit of 60 seconds than when one does not use objective aggregation.

\section{Conclusions}

In this paper, we studied the ordered weighted averaging problem
($\textsc{OWA}~\mathcal{P}$)
 for combinatorial optimization problems~$\mathcal{P}$ with multiple objectives (or cost scenarios in the robust setting), which is a popular approach in decision making literature. The current best approximation algorithm for $\textsc{OWA}~\mathcal{P}$
with
nonincreasing weights from~\cite{KZ16}, under the assumption that~$\mathcal{P}$ with a single objective is polynomially solvable,
 is based on using an 
average-cost objective approach and gives an approximation guarantee of $w_1 K$, where $w_1\in[1/K,1]$ is the largest weight, and $K$ is the number of objectives. 
 By using the aggregation method proposed  ($\ell$-\textsc{Aggregation Algorithm})
that combines the groups of $\ell$ objectives, we were able to improve this bound to a $\ell\rho$ approximation, where $\rho$ depends on the distribution of weights. 

While solving an OWA version of
a combinatorial optimization problem
 with more than one objective is usually not polynomial, it is still possible to derive an $\varepsilon K$-approximation algorithm for any fixed constant $\varepsilon\in (0,1)$ that runs in polynomial time
under assumption that the OWA problem 
has a polynomial 2-approximation algorithm for constant number of objectives.

Furthermore, we examined the Hurwicz criterion, which is a special case of OWA, where only the best case and the worst case of a solution are considered. We showed that it is possible to solve this problem by considering $K$ min-max robust optimization problems, for which some approximation guarantees already exist, thus further improving the best-known approach for this case.

In computational experiments, we tested the practical performance of solutions that are found through the proposed objective aggregation, and another heuristic based on $K$-means aggregation. Results are much closer to optimal objective values than the theoretical bound indicates, and even perform better than the direct solution approach if the number of objectives is large. 

\subsubsection*{Acknowledgements}
Adam Kasperski and Pawe{\l} Zieli{\'n}ski are  supported by the National Science Centre, Poland, grant 2017/25/B/ST6/00486.

\end{document}